\documentclass[conference]{IEEEtran}
\IEEEoverridecommandlockouts
\usepackage{cite}
\usepackage{amsmath,amssymb,amsfonts}
\usepackage{amsthm}
\usepackage{graphicx}
\usepackage{textcomp}
\usepackage{xcolor}
\usepackage{acronym}
\usepackage{enumitem}
\usepackage{algpseudocode}
\usepackage{graphicx}
\usepackage{tabularx}
\usepackage{booktabs}
\usepackage{makecell}
\usepackage{multirow}
\usepackage{wasysym}
\usepackage{float}
\usepackage{placeins}
\usepackage{algorithm}
\usepackage{hyperref}
\def\BibTeX{{\rm B\kern-.05em{\sc i\kern-.025em b}\kern-.08em
    T\kern-.1667em\lower.7ex\hbox{E}\kern-.125emX}}

\acrodef{FL}{Federated Learning}
\acrodef{SecAgg}[SecAgg]{Secure Aggregation}
\acrodef{SS}{Secret Sharing}
\acrodef{VSS}{Verifiable Secret Sharing}
\acrodef{ZKP}{Zero-Knowledge Proof}
\acrodef{NIZK}{Non-Interactive Zero-Knowledge}
\acrodef{zkSNARK}[zk-SNARK]{Zero-Knowledge Succinct Non-Interactive Argument of Knowledge}
\acrodef{HE}{Homomorphic Encryption}
\acrodef{TEE}{Trusted Execution Environment}
\acrodef{RA}{Remote Attestation}
\acrodef{DP}{Differential Privacy}
\acrodef{IVC}{Incrementally Verifiable Computation}
\acrodef{MPC}{Secure Multi-Party Computation}
\acrodef{ML}{Machine Learning}
\acrodef{GDPR}{General Data Protection Regulation}
\acrodef{PoC}{Proof-of-Concept}

\acrodefplural{ZKP}[ZKPs]{Zero-Knowledge Proofs}
\acrodefplural{NIZK}[NIZKs]{Non-Interactive Zero-Knowledge Proofs}
\acrodefplural{zkSNARK}[zk-SNARKs]{Zero-Knowledge Succinct Non-Interactive Arguments of Knowledge}

\newcommand{\client}[1]{{{\mathsf{C}}^{(#1)}}}
\newcommand{\agg}[1]{{{\mathsf{A}}^{(#1)}}}
\newtheorem{definition}{Definition}
\newtheorem{theorem}{Theorem}
\newcommand{\Framework}{\mathfrak{F}}
\newcommand{\FLObj}{\mathsf{Obj}^{\mathsf{FL}}}
\newcommand{\CryptoObj}{\mathsf{Obj}^{\mathsf{Crypto}}}
\newcommand{\Params}{\mathsf{params}}
\newcommand{\Aux}{\mathsf{Aux}}
\newcommand{\Commit}{\mathsf{Commit}}

\newcommand{\Prove}{\mathsf{Prove}}
\newcommand{\Verify}{\mathsf{Verify}}
\newcommand{\LocalTrain}{\mathsf{LocalTrain}}
\newcommand{\Aggregate}{\mathsf{Aggregate}}
\newcommand{\Update}{\mathsf{Update}}
\newcommand{\Encode}{\mathsf{Encode}}

\newcommand{\ppnizk}{\mathsf{pp}_{\Pi}}

\newcommand{\fullcirc}{\ensuremath{\CIRCLE}}
\newcommand{\emptycirc}{\ensuremath{\Circle}}

\algrenewcommand\algorithmicrequire{\textbf{Input:}}
\algrenewcommand\algorithmicensure{\textbf{Output:}}

\newcommand{\proto}{\emph{PoCoFL}}

\newcommand{\GameBox}[2]{%
  \begingroup
  \setlength{\tabcolsep}{5pt}%
  \renewcommand{\arraystretch}{1.05}%
  \begin{tabularx}{\linewidth}{|>{\raggedright\arraybackslash}X|}
    \hline
    \textbf{#1}\\
    \hline
    #2\\
    \hline
  \end{tabularx}%
  \endgroup
}

\begin{document}

\title{PoCoFL: POlicy-COmpliant Federated Learning}

\author{
\IEEEauthorblockN{
Dominik Roy George,
Varesh Mishra,
and Aysajan Abidin
}
\IEEEauthorblockA{\textit{COSIC, KU Leuven}, Belgium\\
dgeorge@esat.kuleuven.be, varesh.mishra@esat.kuleuven.be, aysajan@kuleuven.be}
}

\maketitle

\begin{abstract}
\ac{FL} is a privacy-oriented learning paradigm that enables collaborative model training while keeping training data local to participating clients. However, it does not guarantee that clients submit policy-compliant contributions or that aggregators process admitted contributions correctly. Existing verifiable \ac{FL} systems tailor validation rules to specific \ac{FL} settings, learning workflows, and cryptographic constructions, limiting their applicability across network topologies, participant roles, and aggregation semantics. In this paper, we present \proto, a policy-compliant federated learning framework that separates three aspects: (i) \ac{FL} type, (ii) policy semantics, and (iii) cryptographic realisation. We provide a formalisation that captures client and aggregation requirements as policy-dependent relations. Clients prove compliance of their contributions using commitments and non-interactive zero-knowledge proofs, while aggregators prove that the recorded set of admitted contributions was processed according to the selected aggregation policy. We demonstrate \proto\ through four formal instantiations: (i) vanilla, (ii) continual, (iii) personalised, and (iv) threshold-encrypted federated learning.
We evaluate the effects of policy enforcement on the learning objectives of vanilla, personalised, and continual \ac{FL}. We further implement proof-of-concept realisations of all four instantiations, demonstrating the versatility and practical feasibility of \proto. Overall, these results show that \proto\ can capture complex policy representations while remaining network-topology agnostic.
\end{abstract}

\begin{IEEEkeywords} 
federated learning, verifiable computation, policy compliance,
zero-knowledge proofs
\end{IEEEkeywords}

\section{Introduction}
\ac{FL} is well suited to privacy-sensitive domains such as healthcare because it enables several institutions to train a shared model without centrally collecting their patient data. Keeping data at its source can reduce unnecessary transfers and support data-protection principles such as data minimisation under the \ac{GDPR}~\cite{gdpr2016}. Nevertheless, data locality alone does not make the learning process trustworthy~\cite{rizzini2026sok}. A coordinating server may omit, modify, or incorrectly aggregate client contributions, while a faulty, malicious, or compromised client may submit an unreliable update~\cite{wang2025zkfl,bagdasaryan2020backdoor}. Secure aggregation can conceal individual contributions, but does not establish that they were generated correctly or are suitable for inclusion~\cite{bonawitz2017secureaggregation,chowdhury2022eiffel}. 
Likewise, an admissibility check certifies only the conditions it explicitly encodes. We therefore express the required guarantees as policies, making the scope of verification explicit and interpreting compliance relative to the selected conditions.

These policies depend on the federated-learning setting. Server-assisted, personalised, continual, decentralised, and other \ac{FL} paradigms differ in participant roles, retained state, aggregation semantics, and trust assumptions~\cite{yangFL2019, fedavgmcmahan17a, dinh2020pfedme, Kim2024cl, fangbalance, tHEZhu}. For example, server-assisted settings may require both confidentiality of client contributions and evidence of correct aggregation, while decentralised settings distribute these responsibilities among peers. Personalized and continual settings may additionally require policies over client-specific models or retained state. Hence, admissibility and aggregation correctness cannot be defined uniformly, but must reflect the selected execution context.

Existing work addresses individual parts of this problem through secure aggregation, validation mechanisms, and verifiable computation~\cite{rizzini2026sok}. However, existing constructions commonly bind the verified conditions, learning workflow, and cryptographic techniques to a particular protocol. 
Supporting a different topology, state model, aggregation rule, or notion of admissibility may therefore require a new construction and security analysis. We use policies as a common abstraction. The \ac{FL} setting fixes the execution context, policies specify the conditions to be enforced, and compatible cryptographic mechanisms provide evidence that these conditions hold.

We address this limitation by introducing \proto, a generic policy-compliant \ac{FL} framework. The central idea is to separate three aspects: (i) the \ac{FL} type and participant roles that define the execution context, (ii) the policy requirements governing client and aggregation actions, and (iii) the cryptographic mechanisms used to prove compliance. Client contributions carry evidence that they satisfy the applicable client policy, while aggregation results carry evidence that the applicable aggregation policy was followed. This provides update admissibility, aggregation correctness with respect to the recorded set of admitted contributions, and policy-relative privacy. \proto\ is network-topology agnostic because policies are instantiated according to the selected \ac{FL} type, and cryptographically agile because policy semantics are separated from the compatible mechanisms used to enforce them. The framework does not assume that every mechanism can be replaced independently, but permits compatible cryptographic realisations that preserve the required functionality, security guarantees, and policy semantics. This separation also provides a common interface between machine learning and applied cryptography. Machine-learning researchers can define policies over local training, robustness, personalisation, retained state, or aggregation, while cryptographers can develop compatible components or secure-computation mechanisms that realise these policies under different security and efficiency requirements.

\noindent\textbf{Contribution.}
Our contributions are as follows.

\begin{itemize}[leftmargin=*,labelindent=0pt]
  \item We identify five requirements for policy-compliant and verifiable \ac{FL} and introduce a generic proof-carrying framework that separates the \ac{FL} type, policy semantics, and compatible cryptographic realisation.

  \item We formalise update admissibility, aggregation correctness, and policy-relative privacy through security games, and prove that the framework provides these properties under the stated cryptographic assumptions.

  \item We instantiate and implement \proto\ for vanilla, personalised, continual, and threshold-\ac{HE} \ac{FL}, thereby demonstrating its practical feasibility. We evaluate the impact of policy enforcement on learning under model-poisoning attacks in the first three settings. We also benchmark the computational and communication overhead of policy verification across all four scenarios.
  
\end{itemize}
\textbf{Roadmap.}
The remainder of the paper is organised as follows. Sec.~\ref{sec:model} presents the system model, requirements, and
threat model. Sec.~\ref{sec:framework} introduces the framework, its components, and generic execution. Sec.~\ref{sec:security-analysis} formalises the security guarantees and provides the associated proofs in Appendix~\ref{app:security-proofs}. Sec.~\ref{sec:Instantiation} presents representative framework instantiations and Sec.~\ref{sec:evaluation} provides their evaluation. Secs.~\ref{sec:discussion} and~\ref{sec:related-work} discuss the framework and related work, respectively. Finally, Sec.~\ref{sec:conclusion} concludes the paper.

\section{System Model, Requirements, and Threat Model}\label{sec:model}
This section establishes the setting in which the framework operates. We first describe the participating entities and \ac{FL} settings, then derive the policy-compliance requirements, and finally define the adversarial model and the security properties targeted by the framework.
\subsection{System Model}\label{sec:system-model}
\begin{figure}[t]
  \centering
  \includegraphics[scale=0.25]{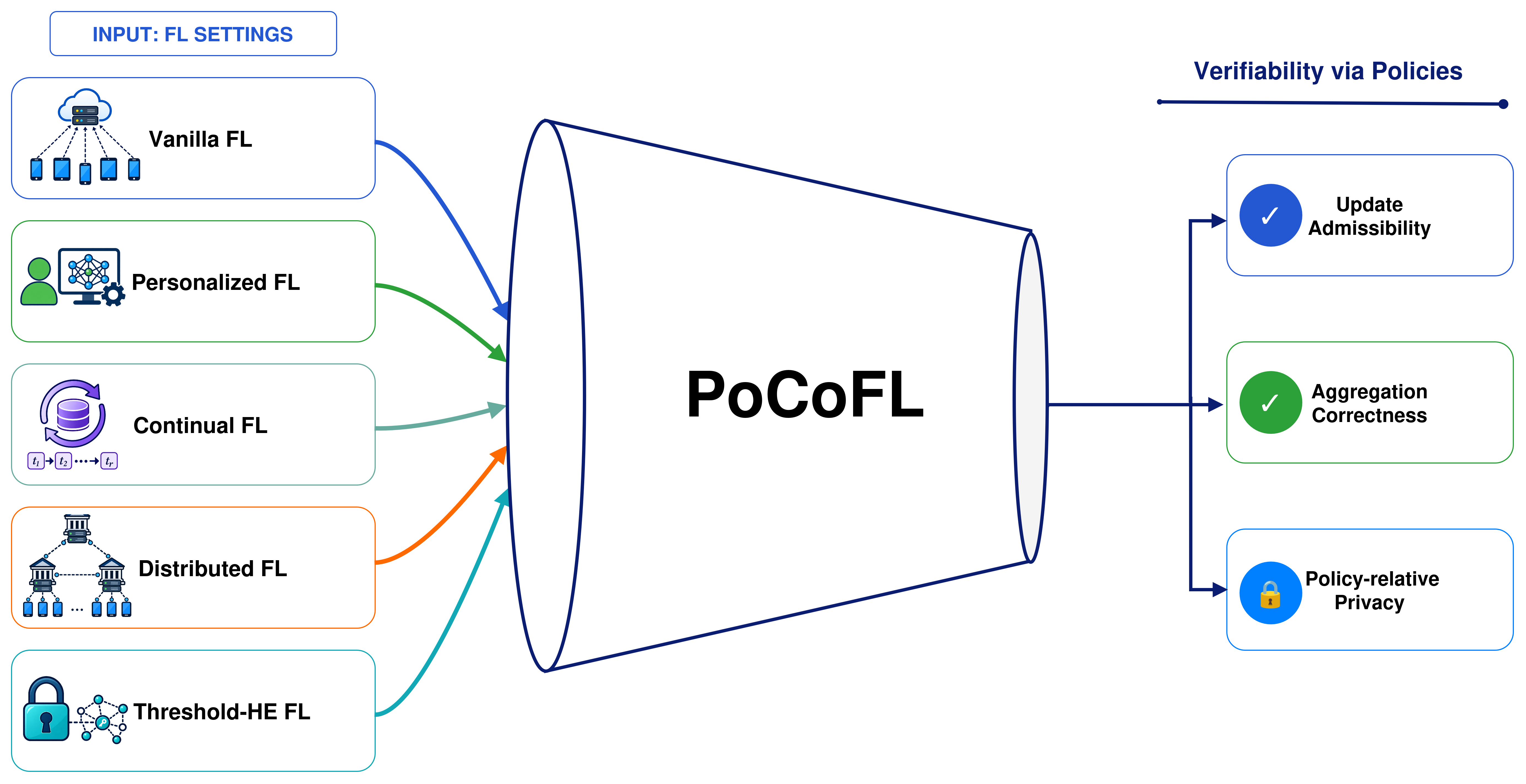}
  \caption{\proto\ system model linking FL settings, policy predicates,
and verifiability guarantees. }
  \label{fig:system-model}
\end{figure}

In our system model, we consider the standard entities involved in an \ac{FL} setting, namely clients and aggregators. Depending on the selected \ac{FL} type, a participant may realise one of these roles or both. A client holds a private local dataset and uses it to train a local model without directly disclosing the underlying data. An aggregator receives the resulting client contributions, such as local model parameters or updates, and combines them according to the prescribed aggregation rule. Whenever a client or an aggregator verifies a proof, it additionally acts as a verifier. The verifier
is therefore a logical role and does not require a separate trusted entity. We assume that communication between participating entities is established over authenticated and confidential channels.

Keeping the local data at the clients prevents its direct disclosure, but does not by itself guarantee that the resulting contribution is policy-compliant. A malicious client may submit a malformed or policy-violating contribution, while a malicious aggregator may omit, duplicate, substitute, modify, or incorrectly weight admitted contributions. Our framework therefore associates client-side and aggregation-side actions with policies that specify the conditions under which these actions are considered compliant. \proto\ aims to make compliance with these policies verifiable while preserving the confidentiality of protected local data and other private policy-relevant inputs. For further adversarial details, we
refer the reader to Sec.~\ref{sec:threat-model}.

\emph{Supported \ac{FL} types.}
Our system model supports different \ac{FL} types, as illustrated in Fig.~\ref{fig:system-model}. The selected type determines the participating roles, communication pattern, retained state, and aggregation semantics, while the corresponding policies determine which properties of these actions need to be verified.

\noindent\textbf{Single server-assisted types.}
In server-assisted types, a logical aggregation service coordinates the training rounds, communicates the shared training state to the clients, receives their contributions, and produces the state used in the subsequent round.
The classic \ac{FL}~\cite{fedavgmcmahan17a, yangFL2019} the clients collaboratively train a shared model. Each client computes a local contribution from its private data, and the aggregator combines the admitted contributions according to the prescribed aggregation rule. The corresponding policies can specify the conditions under which a client contribution is admissible and how the admitted contributions must be aggregated. In this work, we refer to this classic \ac{FL} setting as \(\mathsf{VanillaFL}\). \(\mathsf{VanillaFL}\) is applicable, e.g., in the healthcare domain, where multiple hospitals collaboratively train a common model on medical data such as X-ray images while keeping patient data local and contributing to the improvement of the shared global model.

Meanwhile, in \(\mathsf{PersonalizedFL}\)~\cite{dinh2020pfedme}, collaborative learning is extended with
client-specific models or personalisation state. This allows each client to adapt the knowledge obtained through federation to its local data distribution, task, or application, while still benefiting from the contributions of other clients. Policies may additionally govern the relationship between the local personalised state and the shared global state. 
Personalisation is particularly relevant when hospitals serve populations with varying demographic characteristics or use distinct imaging protocols. Personalised \ac{FL} enables models to adapt to each hospital's local data distribution while leveraging knowledge shared across institutions~\cite{jiangTMI}.

In \(\mathsf{ContinualFL}\)~\cite{Kim2024cl, ewc-paper, Aich2021ElasticWC}, clients learn from a sequence of tasks or
evolving data distributions. A client may therefore retain information across tasks or training rounds so that subsequent training can preserve previously acquired knowledge, limit forgetting, or enforce continuity between successive local states. When required, this retained state can be included as policy-relevant state and governed by the corresponding
verification policy. Healthcare can benefit from continual \ac{FL}, where hospitals collaboratively extend diagnostic models to recognise new diseases by sharing knowledge across institutions. The objective is to learn successive tasks while limiting forgetting of previously acquired knowledge~\cite{pmlr-v139-yoon21b}.

\noindent\textbf{Decentralised and multi-aggregator types.}
\proto\ is not restricted to a single distinguished aggregation server. For instance, participants may communicate directly with their peers, perform local training, aggregate received contributions, and therefore act as both clients and aggregators while maintaining their own evolving model states. The same policy abstraction can govern both the contributions produced by a participant and the aggregation actions it performs.

In such a setting, the aggregation or coordination functionality is shared among multiple parties. These parties may divide the aggregation task, jointly execute the prescribed computation, or collectively authorise the resulting model. A concrete realisation may additionally employ cryptographic mechanisms to prevent a single party from independently recovering protected client contributions or controlling the resulting aggregate. The corresponding policies can specify conditions on participation, admissibility, aggregation, quorum, and correctness.

The types above are representative rather than exhaustive. A further \ac{FL} setting can be supported by defining its participating roles, round structure, policy-relevant state, communication pattern, and aggregation semantics. These elements determine the execution structure of the instantiation, while the policy layer determines which client-side and aggregation-side properties are required for compliance. We provide the formal framework in Sec.~\ref{sec:framework}.

\emph{Policy selection and verifiable evidence.}
Once the \ac{FL} type has fixed the participating roles, round structure, retained state, communication pattern, and aggregation semantics, a realisation selects the policies governing its client and aggregation actions. Each policy associates an action with a policy predicate that specifies the conditions under which the action is considered policy-compliant. A predicate may combine multiple verification conditions over the policy-relevant inputs and state.

The participant performing the action attaches verifiable evidence showing that the corresponding predicate is satisfied. The designated verifier checks this evidence when processing the protocol message and accepts the action only if the required policy conditions hold. Depending on the instantiation, clients and aggregators may therefore act as both producers and verifiers of such evidence. In this way, \proto\ separates the \ac{FL}-specific execution structure from the policies that define compliance. The formal policy representation and generic verification procedures are defined in Sec.~\ref{sec:framework}.

\subsection{Policy-Compliance Requirements}\label{sec:requirements}
Based on the system model, we identify five requirements to realise the policy-compliant and verifiable \ac{FL} Framework. The selected \ac{FL}
type determines the admissible learning, role, state, and topology semantics,
whereas the applicable policies determine which concrete client and
aggregation actions are compliant within those semantics. We state the five requirements below:

\textbf{(R1) Type-aware policy configuration.}
The framework supports different \ac{FL} types and network topologies. Each \ac{FL} type defines the basic execution structure, such as the participant roles, communication pattern, retained state, and aggregation semantics. Policies define the conditions that must be satisfied within this structure, for example, which participants are allowed and which aggregation rule must be followed. The framework also supports reusable policies that apply across different \ac{FL} types, such as verifying that a score is computed correctly and satisfies a given threshold. Thus, policies may be either specific to an \ac{FL} type or reusable across multiple settings.

\textbf{(R2) Verifiable client-policy compliance.}
Every client contribution that may influence an aggregation result or a participant's model state should be accompanied by verifiable evidence that it satisfies the selected client policy and the active protocol context. The policy may govern, for example, participant and round identifiers, model context, committed private data, update bounds, loss conditions, neighbourhood membership, or continuity with retained client state. This requirement is formalised by Property~\textbf{(P1)},
\emph{update admissibility}, in Sec.~\ref{sec:threat-model}.

\textbf{(R3) Verifiable aggregation-policy compliance.}
Every aggregation result should be accompanied by verifiable evidence that the aggregator followed the policy applicable to the selected \ac{FL} type. The policy may govern the recorded admitted set, eligible participants or neighbours, participation threshold, extracted aggregation values, assigned weights, aggregation rule, and resulting model-state transition. This
requirement is formalised by Property~\textbf{(P2)}, \emph{aggregation correctness relative to the recorded set}, in
Sec.~\ref{sec:threat-model}.

\textbf{(R4) Private policy verification.}
A policy may depend on private assets, such as a client's local data, retained state, commitment openings, randomness, or intermediate values. The designated verifier should be able to establish policy compliance without learning these
assets beyond the public statement and the compliance outcome. This requirement is formalised by Property~\textbf{(P3)}, \emph{policy-relative privacy}, in Sec.~\ref{sec:threat-model}. Confidentiality of individual contributions during aggregation is separate and may be provided by an auxiliary mechanism selected by the realisation.

\textbf{(R5) Cryptographic modularity and agility.}
The framework should separate policy logic from its cryptographic realisation. Cryptographic mechanisms should be defined through clear interfaces and may be replaced by compatible alternatives that provide the required functionality and security guarantees. Since some mechanisms depend on each other, replacing one component may also require replacing related components. The framework should also support cryptographic agility, allowing its cryptographic realisation to evolve with new security requirements and standards, including post-quantum alternatives where appropriate. Such changes should preserve the security properties of the framework and the semantics of its policies.

\subsection{Threat Model}\label{sec:threat-model}
We consider probabilistic polynomial-time ($\mathsf{PPT}$) adversaries acting either as malicious clients or as a malicious aggregation server.

\textbf{Malicious clients ($\mathcal{A}_1$).}
The adversary $\mathcal{A}_1$ controls the private inputs, state, randomness, and outgoing messages of one or more malicious clients. Its objective is to have a malformed or policy-violating contribution admitted for aggregation. Depending on the selected \ac{FL} type and client policy, a bad contribution may, for example, be inconsistent with the active model or
round, exceed an allowed bound, or violate a continual-learning condition. The adversary does not obtain the datasets or private states of honest clients.

\textbf{Malicious server ($\mathcal{A}_2$).}
The adversary $\mathcal{A}_2$ controls the aggregation server and may output an arbitrary admitted-message record, aggregation result, and aggregation proof. It may attempt to replay or modify client messages, duplicate or substitute recorded contributions, change aggregation weights, or compute a result inconsistent with the prescribed aggregation policy. It receives the public client messages but does not obtain the private witnesses of honest clients.

Unless stated otherwise, we consider $\mathcal{A}_1$ and $\mathcal{A}_2$ separately and exclude collusion between malicious clients and the server. The public parameters are honestly generated, and protocol messages are bound to their claimed senders by the authenticated-channel assumption.

\emph{Out-of-Scope.}
Denial-of-service attacks and omission before an update becomes part of the public admitted-message record are outside the scope of this work. Thus, aggregation correctness is relative to the record $\mathsf{Adm}_t$ used for the round. It does not prove that every submitted update was placed in that record. A public append-only log or another auditable submission mechanism
can provide that additional guarantee.

\subsubsection{Security Properties.}
Under the adversarial model above, the framework targets three guarantees.

\textbf{(P1) Update admissibility.}
A client contribution is admitted only if its sender and public context match the active client, model, round, policy, and parameters. Its aggregation value is well formed and its statement belongs to the language $\mathcal{L}_{\phi}$ induced by the applicable client policy.

\textbf{(P2) Aggregation correctness relative to the recorded set.}
An aggregation result is accepted only if the recorded list $\mathsf{Adm}_t$ contains pairwise-distinct admissible client messages and the output is obtained by extracting every recorded contribution exactly once, applying the prescribed aggregation rule and weights, and applying the prescribed model-update rule. This prevents omission, duplication, substitution, modification, or reweighting within the recorded set.

\textbf{(P3) Policy-relative privacy.}
For any two private client inputs that satisfy the same policy and induce the same declared public information and compliance outcome, the resulting verification transcripts are computationally indistinguishable to any $\mathsf{PPT}$ verifier. Hence verification reveals no information about the private policy witness beyond the public protocol inputs and the compliance
outcome.

\section{Framework Description}\label{sec:framework}
We now present \proto. 

To develop an intuitive understanding of our framework, it is essential to first examine the rationale behind its design. Our framework follows three high-level design principles.
First, it is \emph{topology versatile}, meaning that it can support different \ac{FL} types and network topologies, while the policies are configured for the selected setting. This reflects Requirement~\textbf{(R1)}. 
Second, it is \emph{cryptographically agile}, meaning that the cryptographic mechanisms used for verification can evolve with security requirements, provided that compatible mechanisms preserve the meaning of the policies and the required security guarantees. This reflects Requirement~\textbf{(R5)}. 
Third, it provides \emph{policy compliance}, meaning that client contributions and aggregation results can be checked against their applicable policies. This reflects Requirements~\textbf{(R2)} and \textbf{(R3)}, while protecting private policy inputs as required by Requirement~\textbf{(R4)}.
These principles determine the six framework components defined next.
\subsection{Framework Components}
\proto\ consists of six components covering the FL setting, governed objects and policies, and cryptographic mechanisms. Formally, a framework instance is represented by the tuple
\begin{equation}
\Framework =
    \left\langle
        \mathsf{Type},
        \mathsf{Asset},
        \mathsf{CS},
        \Pi,
        \Aux,
        \Phi
    \right\rangle.
  \label{eq:framework-tuple}
\end{equation}

We define the six components below:

\subsubsection{FL Type $(\mathsf{Type})$}
The federated-learning type characterises the learning setting and determines the high-level structure of the training protocol. Different FL types may differ in their participants, local computation, maintained state, and aggregation behaviour.
\begin{definition}[Federated learning type]
\label{def:type}
The component $\mathsf{Type}$ identifies the \ac{FL} setting. It is selected from a family such as
\[
\begin{aligned}
\mathcal{T}_{\mathrm{FL}}=\{&\mathsf{VanillaFL},\mathsf{PersonalizedFL},\mathsf{ContinualFL},\ldots\}.
\end{aligned}
\]
The selected type fixes the participating roles, the round structure, the local state maintained by a client, and the aggregation semantics. It does not fix a particular cryptographic construction.
\end{definition}

\subsubsection{Assets $(\mathsf{Asset})$}
A protocol operates on a collection of learning and cryptographic objects that are created, exchanged, or verified during execution. We refer to these objects collectively as the assets of the framework.
\begin{definition}[Assets]
\label{def:asset}
The set $\mathsf{Asset}$ contains the objects on which the framework operates:
\[
  \mathsf{Asset}=\{\FLObj,\CryptoObj\}.
\]
Here, $\FLObj$ contains \ac{FL} objects, such as a global model, a local model, a local update, a round identifier, and any state required by $\mathsf{Type}$. The set $\CryptoObj$ contains cryptographic objects, such as public parameters, commitments, proofs, keys, signatures, hashes, or ciphertexts.
\end{definition}

\subsubsection{Commitment Schemes $(\mathsf{CS})$ }
A commitment scheme allows a party to commit to a value while keeping the value hidden, while preventing the committed value from being changed later~\cite{kzg2010commitment,pedersen1991commitment}. We use commitments to bind private objects to the values subsequently certified by the protocol.
\begin{definition}[Commitment scheme]
\label{def:cs}
The commitment scheme is 
\(
\mathsf{CS}=(\mathsf{Setup}_{\mathsf{CS}},\Commit).
\)
It provides the following algorithms:
\begin{align*}
  \mathsf{pp}_{\mathsf{CS}} \leftarrow \mathsf{Setup}_{\mathsf{CS}}(1^\lambda)
  \quad\text{and}\quad
  (c,r) \leftarrow \Commit(\mathsf{pp}_{\mathsf{CS}},m).
\end{align*}
The value $c$ binds a client to the message $m$, while $r$ is the opening
information. The scheme is required to satisfy correctness, hiding, and
binding for the security parameter $\lambda$.

When a policy requires commitments to an ordered collection
$M=(m^{(1)},\ldots,m^{(J)})$, we write
\[
  (C,R)\leftarrow\Commit(\mathsf{pp}_{\mathsf{CS}},M)
\]
as shorthand for applying the same commitment algorithm independently to
each entry, where
\[
  C=(c^{(1)},\ldots,c^{(J)})
  \quad\text{and}\quad
  R=(r^{(1)},\ldots,r^{(J)}).
\]
\end{definition}
The $\Commit$ interface in our framework intentionally returns both the commitment and its opening information to support policy-dependent use across different realisations. The opening information can remain in a private proof witness or be explicitly provided to an aggregator with the committed value when the selected policy requires opening. 
\subsubsection{\ac{NIZK} proof system $(\Pi)$}
A non-interactive zero-knowledge proof allows a prover to convince a verifier that a statement is true without revealing the corresponding private witness and without requiring interaction beyond the proof itself~\cite{blum1988nizk}. 
\begin{definition}[Non-interactive zero-knowledge proof system]
\label{def:pi}
The proof system is a NIZK system
\[
\Pi=(\mathsf{Setup}_{\Pi},\Prove,\Verify)
\]
for the policy-dependent relation family
\(
\mathcal{R}_{\Phi}
=
\{\mathcal{R}_{\phi}\}_{\phi\in\Phi}.
\)
For each $\phi\in\Phi$, its algorithms are
\[
\ppnizk \leftarrow \mathsf{Setup}_{\Pi}(1^\lambda),
\pi \leftarrow
\Pi.\Prove_{\mathcal{R}_{\phi}}
(\ppnizk,x,w)\,\text{and} 
\]
\[
b \leftarrow
\Pi.\Verify_{\mathcal{R}_{\phi}}
(\ppnizk,x,\pi).
\]

The public statement $x$ contains the public assets associated with a protocol action, while the witness $w$ contains the corresponding private assets and commitment openings. For every $\phi\in\Phi$, the NIZK proves membership in the policy-dependent relation $\mathcal{R}_{\phi}$. In particular,
\[
(x,w)\in\mathcal{R}_{\phi}
\quad\Longleftrightarrow\quad
\mathsf{Pred}_{\phi}(x,w)=1.
\]

The proof system is required to provide completeness, soundness, and zero knowledge.
\end{definition}

\subsubsection{Auxiliary Objects ($\Aux$)}
Some realisations require additional cryptographic primitives beyond the commitment scheme and NIZK proof system. We collect such optional realisation-specific functionality under the auxiliary component.

\begin{definition}[Auxiliary objects]\label{def:aux}
The component $\Aux$ contains optional cryptographic primitives used by a realisation of the framework. Typical examples are
\[
\Aux
=
(\mathsf{Sig},\mathsf{H},\mathsf{HE},\ldots).
\]
Here, $\mathsf{Sig}$, $\mathsf{H}$, and $\mathsf{HE}$ denote optional signature, hash, and homomorphic encryption mechanisms. A realisation may use any subset of these components as required by its selected policies. Each selected auxiliary primitive must specify the correctness and security properties required by the realisation. Any additional security guarantee provided through such a primitive is conditional on its corresponding primitive-specific assumptions, which must be stated by the concrete instantiation.
\end{definition}

\subsubsection{Policy Set $(\Phi)$}
Policies specify the conditions that protocol actions are required to satisfy.
They provide a declarative way to express when a client contribution or an aggregation result is considered compliant.
\begin{definition}[Policy]\label{def:policy}
Let $\mathsf{Asset}$ denote the space of federated-learning and cryptographic assets, including the public parameters used by the protocol, and let $\Aux$ denote the space of optional realisation-specific auxiliary objects.
A policy family is a set $\Phi = \{\phi_1,\ldots,\phi_N\}$.
Each policy $\phi \in \Phi$ specifies a deterministic polynomial-time predicate
\[
    \mathsf{Pred}_{\phi} :
    \mathcal{X}_{\phi} \times \mathcal{W}_{\phi}
    \longrightarrow \{0,1\},
\]
where
$\mathcal{X}_{\phi}\subseteq
\mathsf{Asset}^{*}\times\mathsf{Aux}^{*}$
is the space of public statements, and
$\mathcal{W}_{\phi}\subseteq
\mathsf{Asset}^{*}\times\mathsf{Aux}^{*}$
is the space of private witnesses.

For $x\in\mathcal{X}_{\phi}$ and
$w\in\mathcal{W}_{\phi}$, the value
\(
    \mathsf{Pred}_{\phi}(x, w)=1
\)
means that the assets and auxiliary objects represented by $(x,w)$ satisfy policy $\phi$. A policy may govern either a client action or an aggregation action. Its predicate includes any required consistency among the public and private assets represented by the statement and witness.
\end{definition}

\subsection{Framework Realisation}
The components defined above specify the building blocks of a framework instance. A concrete realisation combines compatible choices for these components into an executable policy-compliant FL protocol and specifies how participants construct, certify, and verify policy-compliant actions. We first translate each policy predicate into the relation recognised by the proof system. We then describe the generic setup and the client- and aggregator-side procedures that use these relations during protocol execution.
\subsubsection{Policy Relation Transformation}
The proof system certifies membership in a relation. We therefore associate each policy predicate with a relation and its corresponding language.
\begin{definition}[Policy relation]
For a policy $\phi \in \Phi$, the policy predicate
$\mathsf{Pred}_{\phi}$ induces the relation
\[
    R_{\phi}
    =
    \left\{
        (x,w)
        \in
        \mathcal{X}_{\phi} \times \mathcal{W}_{\phi}
        \;\middle|\;
        \mathsf{Pred}_{\phi}
        (x,w) = 1
    \right\}.
\]

The language associated with $R_{\phi}$ is
\[
    L_{\phi}
    =
    \left\{
        x \in \mathcal{X}_{\phi}
        \;\middle|\;
        \exists\,w \in \mathcal{W}_{\phi}
        \text{ such that }
        (x,w) \in R_{\phi}
    \right\}.
\]

Therefore, proving compliance with policy $\phi$ amounts to proving the statement
\(
    x \in L_{\phi},
\)
or equivalently,
\[
    \exists\,w \in \mathcal{W}_{\phi} :
    \mathsf{Pred}_{\phi}(x,w) = 1.
\]
The same transformation is used for client and aggregation policies. In the generic algorithms below, $\phi\in\Phi$ denotes the client policy and $\psi\in\Phi$ denotes the aggregation policy. Both belong to the same policy family $\Phi$.
\end{definition}

\subsection{Framework Generic Algorithms}\label{sec:generic-algo}
The policy relations defined above are used by the framework through three generic procedures: (i) a one-time setup procedure, (ii) a client-side procedure executed in each training round, and (iii) an aggregator-side procedure. 

The named procedures appearing in the algorithms, such as $\LocalTrain$, $\Aggregate$, $\Update$, and the auxiliary operations, are abstract interfaces. Their concrete behaviour is determined by the selected
$\mathsf{Type}$, policy family, and realisation. Thus, the algorithms specify how learning and cryptographic components are composed without fixing a particular FL algorithm or cryptographic implementation. The symbol $\bot$ denotes failure, or the absence of an optional output when indicated by the context.

\subsubsection{Generic framework Setup.}
Algorithm~\ref{alg:generic-framework-setup-compact} initialises the framework before training begins. Let
\(
\mathcal P
=
\{\mathsf{C}^{(i)}\}_{i\in I}
\cup
\{A^{(j)}\}_{j\in J}
\)
denote the set of participating clients and aggregators. The setup derives the public parameters required by the selected realisation and initialises any participant-specific state required by its cryptographic mechanisms.

The realisation specification $\mathsf{spec}$ contains choices that are not fixed directly by the framework tuple, such as model dimensions, identifiers, thresholds, and realisation-specific procedure parameters. The procedure $\operatorname{ParamGen}$ derives the corresponding common parameters $\mathrm{params}$. The commitment scheme and NIZK proof system are then initialised independently to obtain
$\mathrm{pp}_{\mathsf{CS}}$ and $\mathrm{pp}_{\Pi}$. 

A realisation may additionally use optional mechanisms from
$\Aux$. We write
\(
\Aux=(\Aux_1,\ldots,\Aux_\ell),
\)
where $\Aux_k$ denotes the $k$-th optional mechanism used by the realisation. For each such mechanism,
$\operatorname{Setup}_{\mathsf{Aux}_k}$ produces its public parameters $\mathrm{pp}_{\mathsf{Aux}_k}$ and, when required, private state $\mathrm{st}^{\mathsf{Aux}_k}_P$ for each participant $P\in\mathcal P$.
A component that requires no participant-specific state returns $\bot$ for that state, while a component requiring no setup may return a fixed public description. 

The framework combines these outputs into the public parameter set $\mathrm{pp}_{\mathcal F}$ and the participant-specific setup state $\mathrm{st}^{\mathrm{setup}}_P$. Public parameters are made available to
all participants, whereas private setup state is delivered only to the participant for which it was generated. We assume authenticated and confidential channels for this setup distribution and their concrete establishment is deployment-specific and lies outside the generic framework.

After setup, $\Params$ denotes the public portion of $\mathrm{pp}_{\mathcal F}$ required by a particular round operation. Temporary setup randomness and any trapdoor information not explicitly
assigned to a participant are erased. In the round algorithms below, a participant is understood to have access to its assigned setup state $\mathrm{st}^{\mathrm{setup}}_P$ whenever required by a selected auxiliary mechanism and this state is suppressed from the algorithm inputs for readability.

\begin{algorithm}[t]
\caption{Generic Setup for \proto}
\label{alg:generic-framework-setup-compact}
\small
\begin{algorithmic}[1]
\Require Security param $1^\lambda$, 
$\mathcal F=(\mathsf{Type},\mathsf{Asset},\mathsf{CS},
\Pi,\mathsf{Aux},\Phi)$, participant sets
$\{\mathsf{C}^{(i)}\}_{i\in I}$ and $\{A^{(j)}\}_{j\in J}$,
realisation specification $\mathsf{spec}$
\Ensure $\mathrm{pp}_{\mathcal F}$ and
$\{\mathrm{st}^{\mathrm{setup}}_P\}_{P\in\mathcal P}$

\State $\mathcal P\gets
       \{\mathsf{C}^{(i)}\}_{i\in I}\cup
       \{A^{(j)}\}_{j\in J}$
\State Establish the required secure channels among $\mathcal P$

\State $\mathrm{params}\gets
       \operatorname{ParamGen}(1^\lambda,\mathsf{Type},
       \mathsf{Asset},\Phi,\mathsf{spec})$
\State $\mathrm{pp}_{\mathsf{CS}}\gets
       \operatorname{Setup}_{\mathsf{CS}}(1^\lambda)$
\State $\mathrm{pp}_{\Pi}\gets
       \operatorname{Setup}_{\Pi}(1^\lambda)$

\For{$k=1,\ldots,\ell$}
    \State
    $\left(
      \mathrm{pp}_{\mathsf{Aux}_k},
      \{\mathrm{st}^{\mathsf{Aux}_k}_P\}_{P\in\mathcal P}
    \right)
    \gets
    \operatorname{Setup}_{\mathsf{Aux}_k}
    (1^\lambda,\mathcal P,\mathrm{params})$
    \If{$\operatorname{Setup}_{\mathsf{Aux}_k}$ fails}
        \State \Return $\bot$
    \EndIf
\EndFor

\State $\mathrm{pp}_{\mathsf{Aux}}\gets
       (\mathrm{pp}_{\mathsf{Aux}_1},\ldots,
        \mathrm{pp}_{\mathsf{Aux}_{\ell}})$

\ForAll{$P\in\mathcal P$}
    \State $\mathrm{st}^{\mathrm{setup}}_P\gets
           (\mathrm{st}^{\mathsf{Aux}_1}_P,\ldots,
            \mathrm{st}^{\mathsf{Aux}_{\ell}}_P)$
\EndFor

\State $\mathrm{pp}_{\mathcal F}\gets
       (\mathrm{params},
        \mathrm{pp}_{\mathsf{CS}},
        \mathrm{pp}_{\Pi},
        \mathrm{pp}_{\mathsf{Aux}})$

\State Publish $\mathrm{pp}_{\mathcal F}$ and securely provide
       $\mathrm{st}^{\mathrm{setup}}_P$ to each $P\in\mathcal P$
\State Erase temporary setup randomness and unprescribed trapdoors

\State \Return
       $\left(
        \mathrm{pp}_{\mathcal F},
        \{\mathrm{st}^{\mathrm{setup}}_P\}_{P\in\mathcal P}
       \right)$
\end{algorithmic}
\end{algorithm}

\subsubsection{Client Side execution.}
Algorithm~\ref{alg:client} describes how client $\client{i}$ constructs a proof-carrying message for round $t$. The client first executes the local learning procedure prescribed by $\mathsf{Type}$ to obtain its local contribution
\[
u_{i,t}
\gets
\LocalTrain(\mathsf{Type},\theta_t,D_{i,t},\Params).
\]
The object $\FLObj_{i,t}$ collects the learning state relevant to the client policy, including the current model, local data, resulting contribution, client identity, and round identifier.

For a client policy $\phi$, let
\(
M_{i,t}
=
(m_{i,t}^{(1)},\ldots,m_{i,t}^{(J_{\phi})})
\)
denote the ordered collection of objects that the policy requires the client to commit to. Both the number of entries $J_{\phi}$ and their meaning are determined by $\phi$. The commitment procedure produces the public commitments $C_{i,t}$ and the corresponding private opening information
$R_{i,t}$.

The policy may also require auxiliary mechanisms. We write $\Aux_{\phi}$ for the subset of auxiliary mechanisms used by $\phi$. Applying these mechanisms to $\FLObj_{i,t}$ produces public outputs
$Z_{i,t}^{\mathsf{pub}}$ and private outputs $Z_{i,t}^{\mathsf{priv}}$ and either collection may be empty when no such output is required.

The client then forms the public statement
\[
x_{i,t}
=
(i,t,\phi,\theta_t,C_{i,t},
 Z_{i,t}^{\mathsf{pub}},\Params)
\]
and the private witness
\(
w_{i,t}
=
(\FLObj_{i,t},M_{i,t},R_{i,t},
 Z_{i,t}^{\mathsf{priv}}).
\)
This separation ensures that the verifier receives only the public context, commitments, and permitted auxiliary outputs, while the underlying learning objects, commitment openings, and private auxiliary state remain in the witness.

Before producing a proof, the client checks that $(x_{i,t},w_{i,t})\in\mathcal{R}_{\phi}$. If the relation is not satisfied, the client aborts and returns $\bot$. Otherwise, it generates a NIZK proof
$\pi_{i,t}$ for $\mathcal{R}_{\phi}$ and sends
\(
\mu_{i,t}
=
(i,t,x_{i,t},\pi_{i,t}).
\)
The witness $w_{i,t}$ is never transmitted.
\begin{algorithm}[t]
\caption{Client $\client{i}$ in round $t$}
\label{alg:client}
\small
\begin{algorithmic}[1]
\Require $\Framework$, $\phi\in\Phi$, $\theta_t$, and $D_{i,t}$
\Ensure Client message $\mu_{i,t}$ or $\bot$

\State $u_{i,t}\gets
       \LocalTrain(\mathsf{Type},\theta_t,D_{i,t},\Params)$
\State Construct $\FLObj_{i,t}$ from
       $(\theta_t,D_{i,t},u_{i,t},i,t)$
\State Construct $M_{i,t}$ from $\FLObj_{i,t}$ as prescribed by $\phi$
\State $(C_{i,t},R_{i,t})\gets
       \Commit(\mathsf{pp}_{\mathsf{CS}},M_{i,t})$
\State $(Z_{i,t}^{\mathsf{pub}},Z_{i,t}^{\mathsf{priv}})\gets
       \mathsf{Apply}(\Aux_{\phi},\FLObj_{i,t},\Params)$

\State $x_{i,t}\gets
       (i,t,\phi,\theta_t,C_{i,t},
        Z_{i,t}^{\mathsf{pub}},\Params)$
\State $w_{i,t}\gets
       (\FLObj_{i,t},M_{i,t},R_{i,t},
        Z_{i,t}^{\mathsf{priv}})$

\If{$\mathcal{R}_{\phi}(x_{i,t},w_{i,t})=0$}
    \State \Return $\bot$
\EndIf

\State $\pi_{i,t}\gets
       \Pi.\Prove_{\mathcal{R}_{\phi}}(\ppnizk,x_{i,t},w_{i,t})$
\State \Return
       $\mu_{i,t}=(i,t,x_{i,t},\pi_{i,t})$
\end{algorithmic}
\end{algorithm}

\subsubsection{Aggregator side execution.}
Algorithm~\ref{alg:aggregator} describes the execution of aggregator $\agg{\ell}$ in round $t$. It uses the client policy $\phi$ to determine which submitted client messages may be admitted and the aggregation policy $\psi$ to determine whether the resulting aggregation action is compliant.
The aggregator maintains three synchronised records. The set $I_t^{\mathsf{adm}}$ contains the identifiers of admitted clients, $\mathsf{Adm}_t$ is the ordered record of their accepted proof-carrying
messages, and $V_t$ contains the corresponding values used for numerical aggregation.

For each received message
\(
\mu_{i,t}=(i,t,x_{i,t},\pi_{i,t}),
\)
the aggregator first rejects duplicate client identifiers. It then checks that the public statement is bound to the expected client, round, policy, current model, and public parameters through $\mathsf{Ctx}$. The accompanying NIZK proof is verified under $\Pi$. Only after these checks succeed does the 
aggregator call
\(
\mathsf{ExtractAgg}_{\mathsf{Type},\phi}(x_{i,t})
\)
to obtain the contribution $v_{i,t}$ prescribed by the selected FL type and client policy. If no valid aggregation value can be obtained, the message is discarded. Otherwise, the client identifier, message, and extracted value are added consistently to $I_t^{\mathsf{adm}}$, $\mathsf{Adm}_t$, and $V_t$.

Once all candidate messages have been processed, the participation predicate $\mathsf{Part}_{\mathsf{Type},\psi}$ checks whether the admitted set satisfies the participation requirements of the selected FL type and aggregation policy. If this condition fails, the round aborts.
The aggregator then computes the round aggregate. The procedure $\Aggregate_{\mathsf{Type},\psi}$ specifies the aggregation rule prescribed by the FL type and policy, while $\mathsf{ApplyAgg}$ allows the realisation to carry out that rule using any auxiliary mechanism required by $\psi$.
When no auxiliary mechanism is required, it reduces to the prescribed aggregation computation itself. The resulting aggregate $\Delta_t$ is applied to the current model by $\Update_{\mathsf{Type}}$ to obtain $\theta_{t+1}$.

The aggregation policy may require additional auxiliary evidence or state.
The subsequent call to $\mathsf{Apply}$ therefore produces public outputs $Z_t^{\mathsf{pub}}$ and private outputs $Z_t^{\mathsf{priv}}$ used in the aggregation statement and witness. The aggregator forms
\(
x_t^{\mathsf{agg}}
=
(\ell,t,\phi,\psi,\theta_t,\mathsf{Adm}_t,
 \theta_{t+1},Z_t^{\mathsf{pub}},\Params)
\)
and
\(
w_t^{\mathsf{agg}}
=
(V_t,\Delta_t,Z_t^{\mathsf{priv}}).
\)

The relation $\mathcal{R}_{\psi}$ binds the aggregation result to the admitted record and to the computation prescribed by the aggregation policy. If $(x_t^{\mathsf{agg}},w_t^{\mathsf{agg}})\notin\mathcal{R}_{\psi}$, the aggregator aborts. Otherwise, it generates $\pi_t^{\mathsf{agg}}$ and returns the proof-carrying aggregation message
\(
\nu_t
=
(\ell,t,x_t^{\mathsf{agg}},\pi_t^{\mathsf{agg}}).
\)
A verifier of $\nu_t$ first verifies the client proofs contained in $\mathsf{Adm}_t$ and then verifies the aggregation proof $\pi_t^{\mathsf{agg}}$. Hence, the aggregation proof certifies the aggregation
action relative to the publicly recorded admitted messages and it does not need to recursively verify the client proofs within the aggregation circuit.

\begin{algorithm}[t]
\caption{Aggregator $\agg{\ell}$ in round $t$}
\label{alg:aggregator}
\small
\begin{algorithmic}[1]
\Require $\Framework$, $\phi,\psi\in\Phi$, $\theta_t$, and
         $\{\mu_{i,t}\}_{i\in I_t}$
\Ensure Aggregation message $\nu_t$ or $\bot$

\State $I_t^{\mathsf{adm}}\gets\emptyset$,
       $\mathsf{Adm}_t\gets()$, and $V_t\gets()$

\ForAll{$\mu_{i,t}=(i,t,x_{i,t},\pi_{i,t})$}
    \If{$i\in I_t^{\mathsf{adm}}
         \lor \mathsf{Ctx}(x_{i,t})
         \neq(i,t,\phi,\theta_t,\Params)$}
        \State \textbf{skip to next client message}
    \EndIf
    \If{$\Pi.\Verify_{\mathcal{R}_{\phi}}
          (\ppnizk,x_{i,t},\pi_{i,t})=0$}
        \State \textbf{skip to next client message}
    \EndIf

    \State $v_{i,t}\gets
           \mathsf{ExtractAgg}_{\mathsf{Type},\phi}(x_{i,t})$
    \If{$v_{i,t}=\bot$}
        \State \textbf{skip to next client message}
    \EndIf

    \State $I_t^{\mathsf{adm}}\gets
           I_t^{\mathsf{adm}}\cup\{i\}$ and $V_t[i]\gets v_{i,t}$
    \State Append $\mu_{i,t}$ to $\mathsf{Adm}_t$
\EndFor

\If{$\mathsf{Part}_{\mathsf{Type},\psi}
      (I_t^{\mathsf{adm}},\Params)=0$}
    \State \Return $\bot$
\EndIf

\State $\Delta_t\gets
       \mathsf{ApplyAgg}
       (\Aux_{\psi},\Aggregate_{\mathsf{Type},\psi},
        V_t,\Params)$
\State $\theta_{t+1}\gets
       \Update_{\mathsf{Type}}
       (\theta_t,\Delta_t,\Params)$
\State $(Z_t^{\mathsf{pub}},Z_t^{\mathsf{priv}})\gets
       \mathsf{Apply}
       (\Aux_{\psi},\theta_t,\mathsf{Adm}_t,V_t,
        \Delta_t,\theta_{t+1},\Params)$

\State $x_t^{\mathsf{agg}}\gets
       (\ell,t,\phi,\psi,\theta_t,\mathsf{Adm}_t,
        \theta_{t+1},Z_t^{\mathsf{pub}},\Params)$
\State $w_t^{\mathsf{agg}}\gets
       (V_t,\Delta_t,Z_t^{\mathsf{priv}})$

\If{$\mathcal{R}_{\psi}
      (x_t^{\mathsf{agg}},w_t^{\mathsf{agg}})=0$}
    \State \Return $\bot$
\EndIf

\State $\pi_t^{\mathsf{agg}}\gets
       \Pi.\Prove_{\mathcal{R}_{\psi}}(\ppnizk,x_t^{\mathsf{agg}},w_t^{\mathsf{agg}})$
\State \Return
       $\nu_t=(\ell,t,x_t^{\mathsf{agg}},\pi_t^{\mathsf{agg}})$
\end{algorithmic}
\end{algorithm}

\section{Security Analysis}\label{sec:security-analysis}
We establish update admissibility, aggregation correctness, and policy-relative privacy for one protocol round. The definitions extend to polynomially many clients and rounds by the standard union-bound and hybrid arguments. Sender identities are authenticated as specified in the threat model, and a corrupted participant receives its prescribed setup state.

\paragraph{Cryptographic assumptions.}
Our generic guarantees rely on standard properties of the commitment scheme and NIZK proof system. Soundness of $\Pi$ is used to establish update admissibility and aggregation correctness, while zero knowledge of $\Pi$ together with hiding of $\mathsf{CS}$ establishes policy-relative privacy.
Completeness and correctness ensure acceptance of honestly generated messages.
These are the assumptions used by the generic security reductions below.
When a concrete realisation provides an additional guarantee through an auxiliary cryptographic primitive, that guarantee is additionally subject to the primitive-specific assumptions stated by that realisation.
Formal definitions of the cryptographic notions are given in Appendix~\ref{app:crypto-notions}. Shared notation used by the framework security games is given in Appendix~\ref{app:security-game-notions}.

\noindent\textbf{Update admissibility.}
A malicious client succeeds if it produces a message that passes the framework's client-verification procedure while being semantically invalid under the active client policy. The formal experiment is given in Appendix~\ref{app:game-adm}.

\begin{theorem}[Update admissibility]
\label{thm:update-adm}
For every $\mathsf{PPT}$ adversary $\mathcal{A}_1$, there exists a $\mathsf{PPT}$ adversary $\mathcal{B}_{\mathsf{adm}}$ such that
\[
\operatorname{Adv}^{\mathsf{adm}}_{\Framework,\mathcal{A}_1}(\lambda)
\leq
\operatorname{Adv}^{\mathsf{snd}}_{\Pi,\mathcal{B}_{\mathsf{adm}}}(\lambda).
\]
Consequently, the framework provides update admissibility whenever $\Pi$ is sound.
\end{theorem}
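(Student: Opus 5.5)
The proof is a direct reduction. Given $\mathcal{A}_1$ in the admissibility experiment of Appendix~\ref{app:game-adm}, we build $\mathcal{B}_{\mathsf{adm}}$ against the soundness of $\Pi$ (Appendix~\ref{app:crypto-notions}), and show that every winning output of $\mathcal{A}_1$ is a false statement with an accepting proof.

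\textbf{Construction of $\mathcal{B}_{\mathsf{adm}}$.} On receiving $\ppnizk$ from the soundness challenger, $\mathcal{B}_{\mathsf{adm}}$ simulates the generic setup of Algorithm~\ref{alg:generic-framework-setup-compact} for $\mathcal{A}_1$ with one change: it plugs in the received $\ppnizk$ instead of calling $\mathsf{Setup}_{\Pi}$. It runs $\mathsf{ParamGen}$, $\mathsf{Setup}_{\mathsf{CS}}$ and every $\mathsf{Setup}_{\mathsf{Aux}_k}$ honestly itself. It hands the corrupted clients their prescribed setup states and fixes the active context $(i,t,\phi,\theta_t,\Params)$. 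Since the experiment assumes honest public parameters and no collusion with the server, and $\mathcal{B}_{\mathsf{adm}}$ needs no trapdoor, it can also answer any oracle queries for honest-client messages. It does so by running Algorithm~\ref{alg:client} on its own chosen honest data. The resulting view of $\mathcal{A}_1$ is distributed exactly as in the real experiment.

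\textbf{Extracting a soundness break.} When $\mathcal{A}_1$ outputs a message $\mu=(i,t,x,\pi)$, $\mathcal{B}_{\mathsf{adm}}$ outputs $(x,\pi)$ for the relation $\mathcal{R}_{\phi}$. We then case-split on how $\mathcal{A}_1$ can win.

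\begin{enumerate}
\item \emph{Wrong context.} The message passes the client-verification procedure of Algorithm~\ref{alg:aggregator}, but the statement is bound to the wrong sender, round, policy, model or parameters. This is impossible: the $\mathsf{Ctx}$ check rejects it deterministically, and authenticated channels bind $i$ to its sender.
\item \emph{Not well formed.} The aggregation value is not well formed. This is also excluded, because $\mathsf{ExtractAgg}_{\mathsf{Type},\phi}(x)=\bot$ causes rejection.
\item \emph{False statement.} The only remaining win is that $\Pi.\Verify_{\mathcal{R}_{\phi}}(\ppnizk,x,\pi)=1$ while $x\notin L_{\phi}$, i.e.\ no $w$ satisfies $\mathsf{Pred}_{\phi}(x,w)=1$. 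This is precisely a soundness break.
\end{enumerate}

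Hence every win of $\mathcal{A}_1$ is a win of $\mathcal{B}_{\mathsf{adm}}$, which gives the inequality. $\mathcal{B}_{\mathsf{adm}}$ is $\mathsf{PPT}$ because the simulation and the deterministic checks are polynomial.

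\textbf{Main obstacle.} The delicate step is the case analysis: showing that the experiment's notion of ``semantically invalid'' lines up exactly with ``passes $\mathsf{Ctx}$ and extraction but $x\notin L_{\phi}$''. In particular, the predicate $\mathsf{Pred}_{\phi}$ must itself encode the consistency between commitments and the committed objects. Otherwise a win could come from breaking binding rather than soundness, and the bound would need an extra $\operatorname{Adv}^{\mathsf{bind}}$ term. I would handle this by reading the winning condition in Appendix~\ref{app:game-adm} as being defined via $L_{\phi}$, as stated in (P1).

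A second, smaller issue is the case where $\mathcal{A}_1$ may output several messages. There I would first try a direct argument: $\mathcal{B}_{\mathsf{adm}}$ outputs the first offending message, if the soundness game allows adaptive statement choice. Failing that, it guesses an index, which costs a polynomial factor that the stated tight bound does not allow. So I would rely on the single-round, single-output formulation of the game.
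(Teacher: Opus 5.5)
Your proposal is correct and follows the paper's proof: $\Framework.\mathsf{VfyClient}$ and $\Framework.\mathsf{ValidClient}$ share the identity, round, $\mathsf{Ctx}$ and extraction conjuncts, so a win forces $\Pi.\Verify_{\mathcal{R}_\phi}(\ppnizk,x,\pi)=1$ with $x\notin\mathcal{L}_\phi$, and $\mathcal{B}_{\mathsf{adm}}$ simply outputs $(x,\pi)$. Your explicit embedding of the challenge $\ppnizk$ into an otherwise honestly run $\Framework.\mathsf{Setup}$ is a detail the paper leaves implicit, and your multi-message concern does not arise in this single-output game (if it did, the obstacle would be that $\mathcal{B}_{\mathsf{adm}}$ cannot efficiently recognise the ``offending'' message, which is why Theorem~\ref{thm:agg-correct} pays a sum over $q$ reductions).
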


The formal experiment, advantage definition, and proof are given in Appendix~\ref{app:game-adm}.

\noindent\textbf{Aggregation correctness.}
A malicious aggregator succeeds if it produces an aggregation message that passes framework verification but is not semantically valid with respect to the active aggregation policy and recorded admitted set. The formal experiment is given in Appendix~\ref{app:game-agg}.

\begin{theorem}[Aggregation correctness]
\label{thm:agg-correct}
Let at most $q(\lambda)$ client messages appear in $\mathsf{Rec}_t$.
For every $\mathsf{PPT}$ adversary $\mathcal{A}_2$, there exist $\mathsf{PPT}$ adversaries
$\mathcal{B}_{1},\ldots,\mathcal{B}_{q}$ and $\mathcal{B}_{\mathsf{agg}}$ such that
\[
\operatorname{Adv}^{\mathsf{agg}}_{\Framework,\mathcal{A}_2}(\lambda)
\leq
\sum_{j=1}^{q(\lambda)}
\operatorname{Adv}^{\mathsf{snd}}_{\Pi,\mathcal{B}_{j}}(\lambda)
+
\operatorname{Adv}^{\mathsf{snd}}_{\Pi,\mathcal{B}_{\mathsf{agg}}}(\lambda).
\]
Hence the aggregation-correctness advantage is negligible when $q$ is polynomial and $\Pi$ is sound.
\end{theorem}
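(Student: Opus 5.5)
The plan is to decompose the aggregator's winning event into two disjoint sub-events and bound each by a soundness reduction against $\Pi$. Suppose $\mathcal{A}_2$ outputs an aggregation message $\nu_t=(\ell,t,x_t^{\mathsf{agg}},\pi_t^{\mathsf{agg}})$ that passes framework verification, yet the recorded list $\mathsf{Adm}_t$ or the result $\theta_{t+1}$ is not semantically valid. Framework verification first checks, for every $\mu_{i,t}\in\mathsf{Adm}_t$, the context $\mathsf{Ctx}(x_{i,t})$ and the client proof, and also checks pairwise distinctness of identifiers. Only then does it check $\pi_t^{\mathsf{agg}}$ under $\mathcal{R}_\psi$. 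Semantic invalidity can therefore arise in only two ways:
\begin{itemize}
\item event $E_j$ for $j\le q$: the $j$-th recorded message has an accepting proof while $x_{i,t}\notin L_\phi$;
\item event $E_{\mathsf{agg}}$: every recorded statement lies in $L_\phi$, but $x_t^{\mathsf{agg}}\notin L_\psi$ although $\pi_t^{\mathsf{agg}}$ verifies.
\end{itemize}
By a union bound, $\operatorname{Adv}^{\mathsf{agg}}\le\sum_j\Pr[E_j]+\Pr[E_{\mathsf{agg}}]$.

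For each $j$, I would construct $\mathcal{B}_j$ as follows. It receives $\ppnizk$ from the soundness challenger and generates the remaining setup itself, which is possible because parameters are honestly generated and no trapdoor is needed. It simulates honest clients for $\mathcal{A}_2$ by running Algorithm~\ref{alg:client} with real witnesses; this is possible because soundness gives the reduction no obligation to hide anything. It then runs $\mathcal{A}_2$ and outputs the pair $(x_{i,t},\pi_{i,t})$ taken from the $j$-th entry of $\mathsf{Adm}_t$. The simulation is perfect, so $\mathcal{B}_j$ wins exactly when $E_j$ occurs.

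A subtle point arises when the $j$-th entry is a replayed honest message. In that case $x_{i,t}\in L_\phi$ by completeness and the construction of Algorithm~\ref{alg:client}, so such an entry never triggers $E_j$. Any modification of an honest message either changes $x_{i,t}$, and must then be re-proven, or violates the authenticated-sender binding assumed in the threat model. Similarly, $\mathcal{B}_{\mathsf{agg}}$ runs the same simulation and outputs $(x_t^{\mathsf{agg}},\pi_t^{\mathsf{agg}})$; it wins on $E_{\mathsf{agg}}$.

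The main obstacle is justifying that $E_{\mathsf{agg}}$ fails exactly when $x_t^{\mathsf{agg}}\notin L_\psi$. In other words, if every recorded statement is in $L_\phi$, the identifiers are distinct, and $x_t^{\mathsf{agg}}\in L_\psi$, then the output must equal the correct aggregation. This holds only if $\mathcal{R}_\psi$ is designed to recompute each $v_{i,t}=\mathsf{ExtractAgg}_{\mathsf{Type},\phi}(x_{i,t})$ from the public $\mathsf{Adm}_t$, and to check $\theta_{t+1}=\Update_{\mathsf{Type}}(\theta_t,\Aggregate_{\mathsf{Type},\psi}(V_t))$. I would state this as a relation-design lemma: for all witnesses $w$ with $(x_t^{\mathsf{agg}},w)\in\mathcal{R}_\psi$, the tuple $(V_t,\Delta_t)$ is uniquely determined and agrees with the honest computation.

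Since the extracted values come from public statements, uniqueness is deterministic here and needs no binding argument. If extraction instead passes through commitments, binding of $\mathsf{CS}$ would additionally be needed, but the theorem's bound omits this term. I would therefore insist on the deterministic-extraction reading to stay within the stated bound.
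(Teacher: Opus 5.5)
Your proposal is correct and matches the paper's proof essentially step for step. The paper makes the same split: either some recorded client statement lies outside $\mathcal{L}_{\phi}$ while its proof verifies, union-bounded over the at most $q(\lambda)$ entries of $\mathsf{Rec}_t$, or the record is valid and $x_t^{\mathsf{agg}}\notin\mathcal{L}_{\psi}$; each case is reduced directly to soundness of $\Pi$. The relation-design lemma you flag as the main obstacle is not needed for the stated bound, because the paper defines $\Framework.\mathsf{ValidAgg}$ directly through $x_t^{\mathsf{agg}}\in\mathcal{L}_{\psi}$. The paper only asserts separately, as a design property of $\mathcal{R}_{\psi}$, that membership implies $\theta_{t+1}=\Framework.\mathsf{AggOut}_{\mathsf{Type},\psi}(\theta_t,\mathsf{Rec}_t,\Params)$.
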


The formal experiment, advantage definition, and proof are given in Appendix~\ref{app:game-agg}.

\noindent\textbf{Policy-relative privacy.}
The privacy experiment compares two valid private client inputs that induce the same permitted public context and asks whether a verifier can distinguish which input generated the proof-carrying message. The formal experiment is given in Appendix~\ref{app:game-priv}.

\begin{theorem}[Policy-relative privacy]
\label{thm:policy-privacy}
For every $\mathsf{PPT}$ verifier $\mathcal{A}_2$, there exist
$\mathsf{PPT}$ adversaries
$\mathcal{B}_{\mathsf{zk},0}$,
$\mathcal{B}_{\mathsf{hid}}$, and
$\mathcal{B}_{\mathsf{zk},1}$ such that
\[
\begin{aligned}
\operatorname{Adv}^{\mathsf{priv}}_{\Framework,\mathcal{A}_2}(\lambda)
\leq{}&
\operatorname{Adv}^{\mathsf{zk}}_{\Pi,\mathcal{B}_{\mathsf{zk},0}}(\lambda)
+
\operatorname{Adv}^{\mathsf{hid}}_{\mathsf{CS},\mathcal{B}_{\mathsf{hid}}}
(\lambda)
\\
&+
\operatorname{Adv}^{\mathsf{zk}}_{\Pi,\mathcal{B}_{\mathsf{zk},1}}(\lambda).
\end{aligned}
\]
Consequently, the framework provides policy-relative privacy when $\Pi$ is zero knowledge and $\mathsf{CS}$ is hiding.
\end{theorem}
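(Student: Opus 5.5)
We prove Theorem~\ref{thm:policy-privacy} with a four-hybrid argument over the privacy experiment of Appendix~\ref{app:game-priv}. In that experiment, $\mathcal{A}_2$ chooses two private client inputs $(D^0,\mathsf{st}^0)$ and $(D^1,\mathsf{st}^1)$, both satisfying the client policy $\phi$ and inducing the same declared public information, that is, the same $(i,t,\phi,\theta_t,Z^{\mathsf{pub}}_{i,t},\Params)$ and the same compliance outcome. The challenger runs Algorithm~\ref{alg:client} on input $b$ and returns $\mu_{i,t}=(i,t,x_{i,t},\pi_{i,t})$. The only parts of $\mu_{i,t}$ that can depend on $b$ are the commitments $C_{i,t}$ inside $x_{i,t}$ and the proof $\pi_{i,t}$. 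The plan is to replace $\pi_{i,t}$ by a simulated proof, which removes its dependence on the witness, and then to swap the committed messages using hiding.

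The hybrids are as follows.
\begin{itemize}
\item $H_0$ is the real experiment with $b=0$.
\item $H_1$ is $H_0$ with $\mathsf{Setup}_{\Pi}$ and $\Prove$ replaced by the zero-knowledge simulator $\mathsf{Sim}=(\mathsf{Sim}_1,\mathsf{Sim}_2)$. The commitments are still computed to $M^0$, and $\pi\gets\mathsf{Sim}_2(\tau,x^0)$.
\item $H_2$ is $H_1$ with the commitments computed to $M^1$ instead of $M^0$. The rest of $x$ is unchanged, and the proof is still simulated on the resulting statement $x^1$.
\item $H_3$ is the real experiment with $b=1$.
\end{itemize}

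The hops are bounded one at a time.
\begin{itemize}
\item \emph{$H_0\approx H_1$.} $\mathcal{B}_{\mathsf{zk},0}$ plays the zero-knowledge game. It runs $\mathcal{A}_2$, computes $(x^0,w^0)$ honestly, which is a valid pair in $\mathcal{R}_\phi$ since input $0$ satisfies $\phi$, and queries its challenge oracle on $(x^0,w^0)$. It then outputs whatever $\mathcal{A}_2$ outputs.
\item \emph{$H_1\approx H_2$.} $\mathcal{B}_{\mathsf{hid}}$ submits $(M^0,M^1)$ to the (vector) hiding challenger, receives $C$, places it in the statement, and simulates the proof using the trapdoor it generated itself. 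If $J_\phi>1$, the entrywise commitments of Definition~\ref{def:cs} require a standard $J_\phi$-step hybrid. We absorb it into the vector-hiding advantage $\operatorname{Adv}^{\mathsf{hid}}$, or, if the appendix defines hiding per message, we note the factor $J_\phi$.
\item \emph{$H_2\approx H_3$.} $\mathcal{B}_{\mathsf{zk},1}$ is symmetric to $\mathcal{B}_{\mathsf{zk},0}$, using the valid pair $(x^1,w^1)$.
\end{itemize}
Summing the three differences gives exactly the stated bound.

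The main obstacle is the middle hop. The simulator must run without the openings $R$, because $\mathcal{B}_{\mathsf{hid}}$ does not know the randomness inside $C$. This is why the proof is simulated in both $H_1$ and $H_2$: only then does nothing in the view depend on the witness, so the hiding reduction goes through. We also need $x^1\in L_\phi$ to hold in $H_2$, so that the simulated view is well defined and the final zero-knowledge hop is legitimate. This holds because the commitments in $H_2$ are honest commitments to $M^1$ with the real openings. A related subtlety is that $Z^{\mathsf{pub}}$ must not depend on $b$, which is enforced by the game's requirement of equal declared public information. Finally, if the zero-knowledge notion in Appendix~\ref{app:crypto-notions} is single-statement, the reductions each make one oracle query, and the simulated CRS is consistent across the hybrids because it is sampled once per hybrid by the respective reduction.
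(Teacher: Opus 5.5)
Your proposal is correct and follows the paper's proof essentially step for step: the same four hybrids $H_0$ (real proof on $x_0$), $H_1$ (simulated proof on $x_0$), $H_2$ (commitment swapped to $C_1$ via hiding, proof still simulated), and $H_3$ (real proof on $x_1$), combined by the triangle inequality. Your explicit reductions, the simulated-CRS swap, and the remark that the proof must remain simulated during the hiding hop make precise what the paper leaves implicit; also, the paper's hiding game is already stated for collections $M_0,M_1$, so no extra $J_\phi$ factor is needed.
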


The formal experiment, advantage definition, and proof are given in Appendix~\ref{app:game-priv}.

\section{Instantiation}\label{sec:Instantiation}
We use these instantiations to show how the abstract interfaces of Sec.~\ref{sec:framework} are specialised for different FL settings. VanillaFL is presented here, while ContinualFL, PersonalizedFL 
and a Threshold-HE realisation are given in Appendix~\ref{app:additional-instantiations}. For each realisation, we specify the type-specific learning and state semantics, the client policy, the value
extracted for aggregation, the aggregation policy, and any required auxiliary mechanism. Generic proof and message construction follow Algorithms~\ref{alg:client} and~\ref{alg:aggregator}.

\paragraph{\bf Vanilla Federated Learning.}
We instantiate $\mathsf{Type}=\mathsf{VanillaFL}$ with clients $\{\client{i}\}_{i\in I}$ and one aggregator $\agg{1}$~\cite{yangFL2019}. In round $t$, the aggregator distributes the current global model $\theta_t$. Each client $\client{i}$ holds a local dataset $D_i$ of size $n_i=|D_i|$ and computes
\(
u_{i,t}
\gets
\LocalTrain(\mathsf{VanillaFL},\theta_t,D_i,\Params),
\) and
\(
\theta_{i,t}
=
\theta_t+u_{i,t}.
\)

\emph{Client policy.}
The client policy $\phi_{\mathsf{loss}}\in\Phi$ requires the submitted update to be bound to the client's private dataset and the resulting local model to satisfy a public loss threshold. The committed collection is
\(
M_{i,t}
=
(D_i,u_{i,t})
\) and
\(
(C_{i,t},R_{i,t})
\gets
\Commit(\mathsf{pp}_{\mathsf{CS}},M_{i,t}).
\)
The update $u_{i,t}$ and dataset size $n_i$ are public, whereas $D_i$ and $R_{i,t}$ remain private. No auxiliary mechanism is required.

The client predicate is
\[
\boxed{
\begin{aligned}
\mathsf{Pred}_{\phi_{\mathsf{loss}}}(x_{i,t},w_{i,t})=1
\quad\Longleftrightarrow\quad&
C_{i,t}\text{ binds }(D_i,u_{i,t})
\\
&{}\land n_i=|D_i|
\\
&{}\land
\mathsf{Loss}(\theta_t+u_{i,t},D_i)\leq\tau .
\end{aligned}
}
\]
Thus, the relation binds the submitted update to the committed dataset and checks the loss condition on the resulting local model. Execution of $\LocalTrain$ itself is outside this policy.

Following Algorithm~\ref{alg:client}, the client proves the corresponding
relation and returns the proof-carrying message
\[
\pi_{i,t}
\gets
\Pi.\Prove_{\mathcal{R}_{\phi_{\mathsf{loss}}}}
(\ppnizk,x_{i,t},w_{i,t}),
\,
\mu_{i,t}
=
(i,t,x_{i,t},\pi_{i,t}).
\]

\emph{Aggregation policy.}
For every admitted client, the type-and-policy extraction interface returns
\(
v_{i,t}
=
\mathsf{ExtractAgg}_{\mathsf{VanillaFL},\phi_{\mathsf{loss}}}
(x_{i,t})
=
(u_{i,t},n_i),
\)
and
\(
V_t[i]
=
(u_{i,t},n_i).
\)
The aggregation policy
$\psi_{\mathsf{FedAvg}}\in\Phi$ applies weighted FedAvg~\cite{fedavgmcmahan17a} to these
extracted values. Let
\(
N_t
=
\sum_{i\in I_t^{\mathsf{adm}}}n_i.
\)
The prescribed aggregate and next global model are
\(
\Delta_t
=
\sum_{i\in I_t^{\mathsf{adm}}}
\frac{n_i}{N_t}u_{i,t},
\)
and 
\(
\theta_{t+1}
=
\theta_t+\Delta_t.
\)

Accordingly,
\(
\mathsf{Pred}_{\psi_{\mathsf{FedAvg}}}
(x_t^{\mathsf{agg}},w_t^{\mathsf{agg}})=1
\)
requires $V_t$ to contain exactly the values extracted from $\mathsf{Adm}_t$, $\Delta_t$ to equal the prescribed FedAvg aggregate, and $\theta_{t+1}$ to be obtained from $\theta_t$ and $\Delta_t$.
Following Algorithm~\ref{alg:aggregator}, the aggregator proves $\mathcal{R}_{\psi_{\mathsf{FedAvg}}}$ and returns 
\(
\pi_t^{\mathsf{agg}}
\gets
\Pi.\Prove_{\mathcal{R}_{\psi_{\mathsf{FedAvg}}}}
(\ppnizk,x_t^{\mathsf{agg}},w_t^{\mathsf{agg}})
\) and 
\(
\nu_t
=
(1,t,x_t^{\mathsf{agg}},\pi_t^{\mathsf{agg}}).
\)

\section{Evaluation}\label{sec:evaluation}

We evaluate four proof-of-concept realisations of the framework described in Sec.~\ref{sec:Instantiation} and Appendix~\ref{app:additional-instantiations}. Our objective is to demonstrate end-to-end policy verification and quantify its computational and communication overhead, rather than to optimise model accuracy or network performance.

\subsection{Proof-of-Concept Implementation}

{\bf Implementation.}
We implement the VanillaFL, ContinualFL, and PersonalizedFL realisations using Flower~\cite{beutel2020flower} for \ac{FL} execution and EZKL~\cite{ezkl} for policy-proof generation and verification. The orchestration and experimental logic are implemented in Python, while EZKL uses its Rust backend to construct circuits and generate proofs. The Threshold-HE realisation instead uses its native group, range, and threshold-decryption proof mechanisms. Further implementation details are
given in Appendix~\ref{app:threshold-he-instantiation}. The Threshold-HE research prototype uses trusted-dealer key generation, and key-generation and setup costs are excluded from the reported round-level measurements.

{\bf Testbed and dataset.}
We evaluate the impact of policy enforcement on model accuracy on a local laptop with an Intel Core Ultra 7 165H processor and 32~GB RAM. Cryptographic microbenchmarks run separately on an AWS EC2 instance with an Intel Xeon Platinum 8488C processor, 32 vCPUs, and approximately 248~GiB RAM. Clients execute sequentially, generating one proof at a time with EZKL using 32 CPU threads.
We evaluate our implementation on MNIST~\cite{lecun1998gradient} with $28\times28$ grayscale images across 100 clients. For policy-aware testing, each client receives 600 distinct training images and uses a batch size of 6 fresh images per round, without cross-client overlap or reuse. VanillaFL uses an IID partition, while PersonalizedFL uses two label-sorted shards per client, with test data matching each client's training-label distribution and remaining disjoint from training data. ContinualFL uses IID partitions with four sequential Permuted-MNIST tasks of 25 rounds each, where the first retains the original pixel order and subsequent tasks use fixed permutations with unchanged labels; earlier tasks are reevaluated to measure forgetting. For EZKL benchmarks, the circuit input uses 4 images for VanillaFL and ContinualFL, and 2 images for PersonalizedFL.



{\bf Experimental settings.}
We have two experimental settings:

{\emph{(i)~Policy-Aware Setting.}} In this setting, we execute 100 clients over 100 FL rounds to present the changes in accuracy with \emph{policy compliance} and without policy enforcement. These experiments do not include cryptographic proof generation and verification,, and they are evaluated separately through the microbenchmarks.
To demonstrate this, we allow $20\%$ of the clients to behave maliciously in two separate rounds and introduce a fixed-target model-poisoning attack. Each malicious client replaces its trained end model with an input-independent classifier that always predicts digit $0$, while retaining the update structure required by the associated FL type, e.g., the recurrence state in PersonalizedFL or the Fisher-information state in ContinualFL. We fix the attack rounds across both the policy-enforced and non-policy-enforced settings to ensure a fair comparison. In the policy-enforced setting, the server aggregates only updates that are policy-compliant, whereas without policy enforcement, the server aggregates every received update. Furthermore, the policy threshold is selected by repetitive experiments. 
However, identifying an optimal threshold for every FL type is outside the scope of this paper. Instead, \proto\ demonstrates the necessity of enforcing policy compliance and its effect on preventing non-compliant updates from influencing the global learning process.

{\emph{(ii)~Microbenchmarking.}} We benchmark our \ac{PoC} with 5 clients and 5 rounds due to resource limitations and we report the mean and sample standard deviation for each measurement. Consequently, the client-proof results are based on 25 proof-generation measurements (five clients in each of 5 rounds), whereas the aggregation-proof and cohort-admission results are each based on 5 round-level measurements (one measurement per round). Throughout the evaluation, we use the term \emph{evidence} to denote proofs and supporting cryptographic data. Proof generation includes witness construction and proving, but excludes circuit setup, local training, and prover-side proof verification. Cohort admission measures verification of all five client submissions in a round. Threshold-HE uses five clients over five measured rounds, yielding 25 client measurements. Three helpers participate in 3-of-3 threshold decryption, and aggregation and independent saved-round validation each provide five measurements. All roles execute sequentially on one machine, and all measurements exclude transport and network costs. The reported timings therefore represent computational overhead rather than parallel, end-to-end round latency.

\begin{figure*}[t]
  \centering  \includegraphics[scale=0.27]{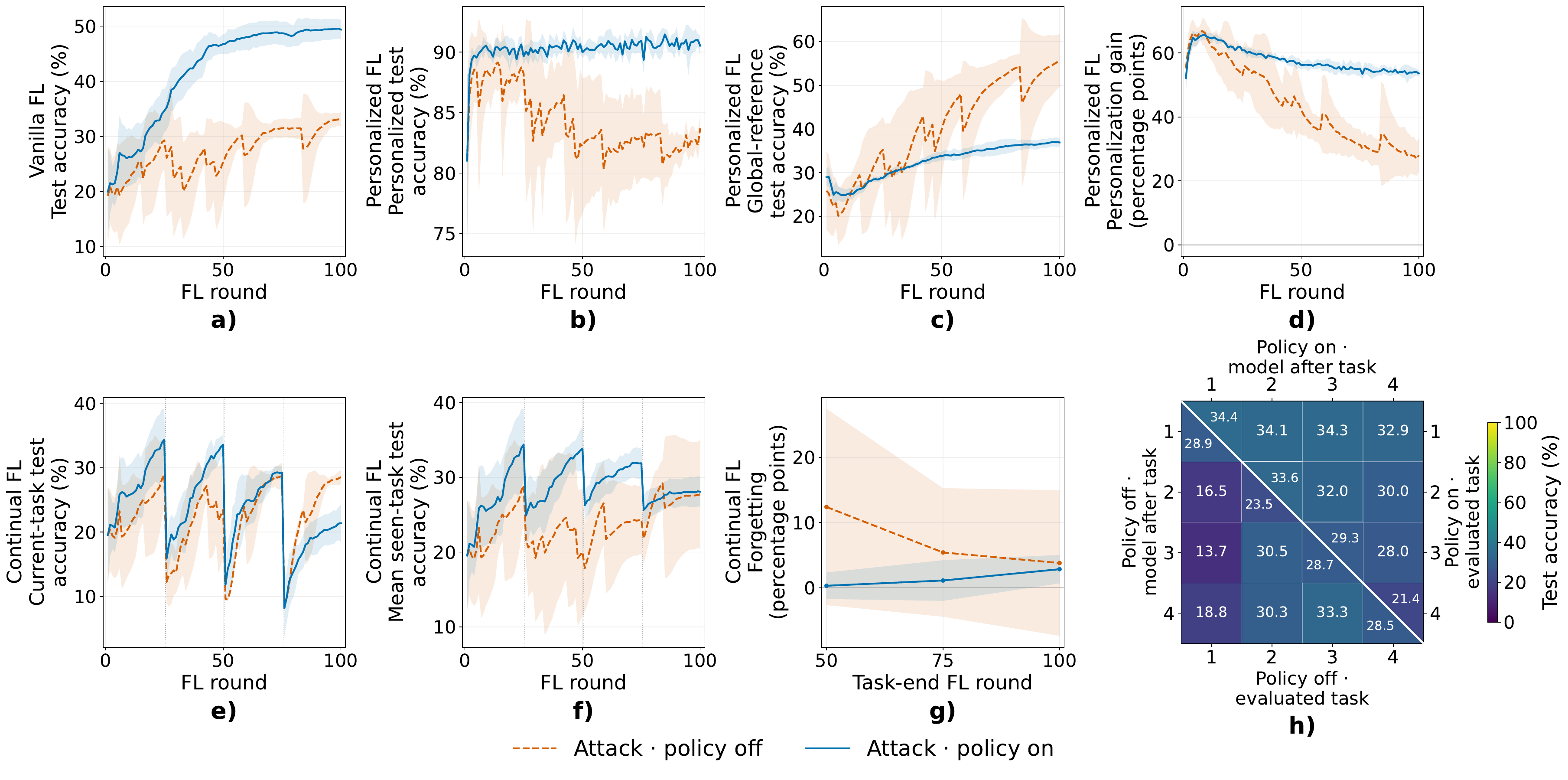}
    \caption{Effect of policy enforcement under model-poisoning attacks in vanilla (a), personalised (b-d), and continual (e-h) FL.
    Shading denotes standard deviation across five paired seeds. In (h), upper/lower triangles show policy enforcement on/off,
    respectively.}
  \label{fig:overview}
\end{figure*}

\subsection{Protocol Overhead Analysis}\label{sec:performance-evaluation} 

{\bf Accuracy evaluation.} 
Fig.~\ref{fig:overview} presents a comparison of VanillaFL, ContinualFL, and PersonalizedFL under conditions with policy enforcement (blue) and without enforcement (orange), using five paired seeds. The shading indicates standard deviation. In all paired comparisons, we list policy-enforced results first. For VanillaFL (Fig.~\ref{fig:overview}(a)), accuracy at round~100 is $49.4\%$ with enforcement and $33.1\%$ without, representing a gain of $16.3$ percentage points. Under the evaluated attack, \proto{} enhances stability, whereas the absence of enforcement leads to repeated declines and slower recovery.
For PersonalizedFL (Figs.~\ref{fig:overview}(b--d)), enforcement stabilizes client-averaged personalized accuracy at approximately $90.5\%$, compared to $83.7\%$ without enforcement (Fig.~\ref{fig:overview}(b)). In contrast, shared-reference accuracy is $36.9\%$ with enforcement and $55.7\%$ without (Fig.~\ref{fig:overview}(c)). The personalisation gain, defined as personalised minus reference accuracy (Fig.~\ref{fig:overview}(b) minus Fig.~\ref{fig:overview}(c)), is $53.6$ percentage points with enforcement and $28.0$ without (Fig.~\ref{fig:overview}(d)). These results indicate that, without enforcement, improvements in global-reference accuracy are accompanied by declines in personalised accuracy and personalisation gain. In contrast, enforcement yields higher personalised accuracy and a larger gain, demonstrating better preservation of client-specific learning under attack and aligning with the intended personalization behavior.
ContinualFL (Figs.~\ref{fig:overview}(e--h)) operates in a domain-incremental setting, where clients train on each of four sequential tasks for $25$ rounds before transitioning to the next. Current-task accuracy after Tasks~1 through 4 (Fig.~\ref{fig:overview}(e)) is $34.4\%$ with enforcement $vs$ $28.9\%$ without, $33.6\%$ $vs$ $23.5\%$, $29.3\%$ $vs$ $28.7\%$, and $21.4\%$ versus $28.5\%$, respectively. Enforcement maintains higher or comparable accuracy for Tasks~1 to 3 but yields lower accuracy for Task~4, reflecting a trade-off between retention and adaptation. The mean seen-task accuracy (Fig.~\ref{fig:overview}(f)) after Tasks~1 to 4 is approximately $(34.4,33.9,31.9,28.1)\%$ with enforcement and $(28.9,20.0,24.3,27.7)\%$ without. Thus, enforcement enhances stability in earlier stages, although the difference declines by Task~4.
Forgetting (Fig.~\ref{fig:overview}(g)) after Tasks~2 to 4 is approximately $(0,1,2.5)$ percentage points with enforcement, compared to $(12,5,3.5)$ without enforcement. Although forgetting increases with enforcement as tasks accumulate, it remains lower at every evaluated task end, indicating better preservation of prior knowledge under attack, rather than consistently higher current-task accuracy.
Figure~\ref{fig:overview}(h) illustrates this retention pattern: the upper and lower triangles represent results with and without enforcement, respectively. With enforcement, Task~1 accuracy declines modestly from $34.4\%$ after Task~1 to $32.9\%$ after Task~4. In contrast, without enforcement, Task~1 accuracy drops from $28.9\%$ to $16.5\%$, $13.7\%$, and then partially recovers to $18.8\%$ after Tasks~1 to 4. Enforced Task~2 accuracy similarly decreases from $33.6\%$ after Task~2 to $32.0\%$ after Task~3 and $30.0\%$ after Task~4. The results for later tasks are subtle. For instance, after Task~4, Task~3 accuracy is lower with enforcement ($28.0\%$) compared to without ($33.3\%$). Overall, enforcement primarily enhances retention and stability for earlier tasks rather than improving accuracy for every task. The observed degradation in later tasks may reflect the continual-learning strategy's inherent balance between stability and adaptation.
It is important to state that the scope of this work emphasizes policy compliance rather than improving the local-learning and aggregation strategies of each individual FL paradigm.

\begin{figure}[t]
  \centering  \includegraphics[scale=0.21]{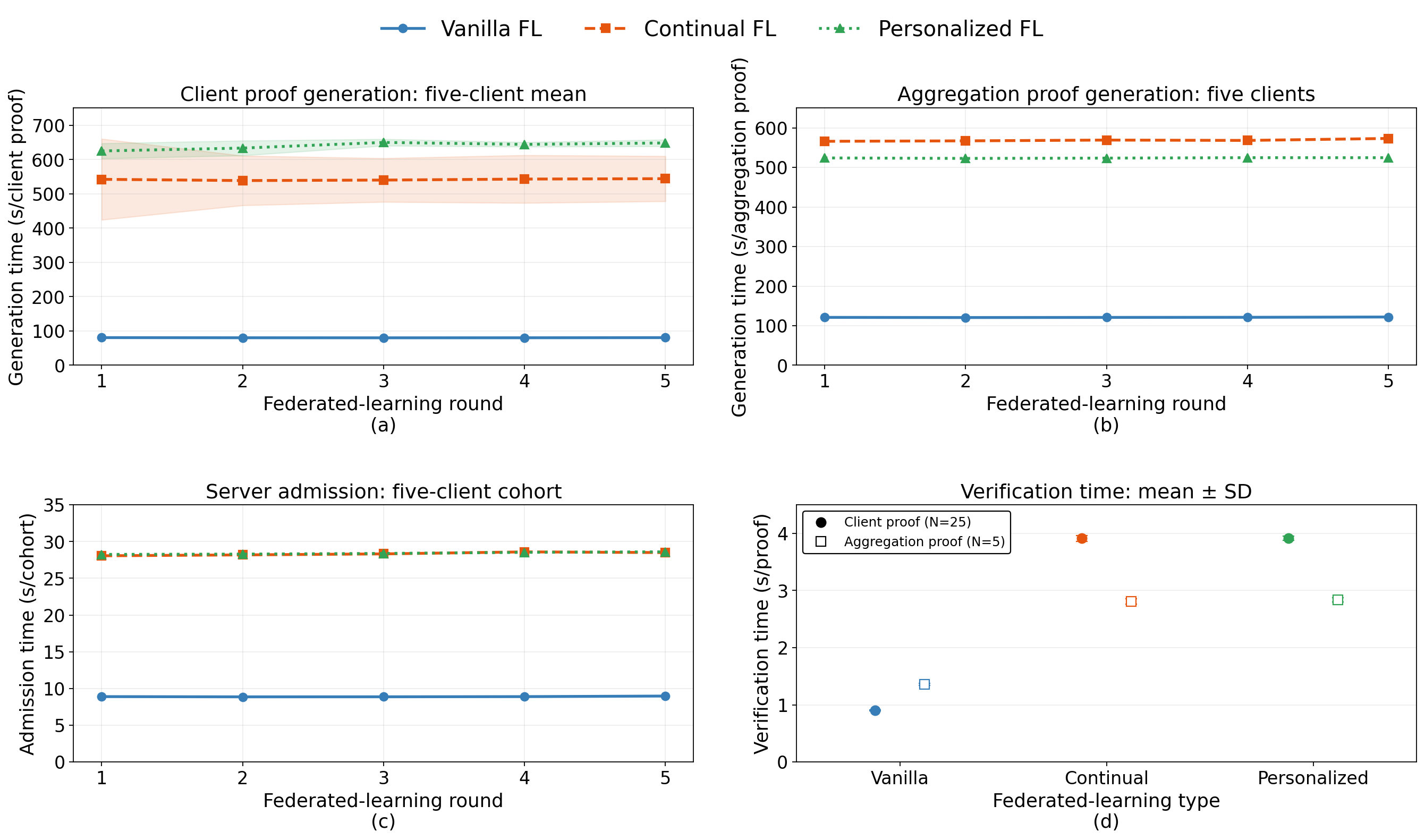}
\caption{EZKL timing over five rounds at $K=5$: (a) mean client-proof generation,
(b) aggregation-proof generation, (c) five-client cohort admission, and (d) mean proof-verification time.
Shading in (a) and error bars in (d) show $\pm$ one sample standard deviation.}
  \label{fig:performance}
\end{figure}

{\bf EZKL realisations.}
Fig.~\ref{fig:performance} reports the timing results for the three EZKL-based realisations over five federated-learning rounds. Fig.~\ref{fig:performance}(a) shows the mean client-proof generation time
in each round for $K=5$ clients, where each point averages the five client proofs generated sequentially in that round. Across all observations, the mean proof-generation time is $80.349\pm0.333$~s for VanillaFL, $541.652\pm73.508$~s for ContinualFL, and $640.166\pm17.073$~s for PersonalizedFL. The round-level means remain relatively stable across the five rounds. 
Fig.~\ref{fig:performance}(b) reports one aggregation-proof generation measurement per round. The corresponding means are $121.175\pm0.506$~s for VanillaFL, $568.800\pm2.764$~s for ContinualFL,
and $523.948\pm0.765$~s for PersonalizedFL. Thus, the specialised ContinualFL and PersonalizedFL aggregation relations require approximately $4.69\times$ and $4.32\times$ the proving time of the VanillaFL relation, respectively.
Fig.~\ref{fig:performance}(c) shows the total server-admission time for verifying all five client submissions in each round. Mean cohort-admission time is $8.900\pm0.043$~s for VanillaFL, $28.332\pm0.218$~s for ContinualFL, and $28.412\pm0.152$~s for PersonalizedFL. Finally, Fig.~\ref{fig:performance}(d) reports the mean proof-verification time. Client-side proof verification takes $0.907\pm0.009$~s for VanillaFL, $3.911\pm0.051$~s for ContinualFL, and $3.917\pm0.039$~s for PersonalizedFL. Aggregation-proof verification takes $1.357\pm0.016$~s,
$2.810\pm0.048$~s, and $2.837\pm0.037$~s, respectively. These per-proof measurements are distinct from the complete cohort admission shown in Fig.~\ref{fig:performance}(c). Error bars in Fig.~\ref{fig:performance}(d) show one sample standard deviation across 25 client proofs or five aggregation proofs.
Table~\ref{tab:policy-to-cryptography} complements these timing results with the circuit and communication costs. Client evidence occupies 41.59~KiB for VanillaFL, 2,188.70~KiB for ContinualFL, and
2,189.16~KiB for PersonalizedFL, while aggregation evidence occupies 1,103.54~KiB, 2,176.80~KiB, and 2,174.72~KiB, respectively. The table shows that the ContinualFL and PersonalizedFL policy relations require larger circuits and application payloads (detailed discussion in App.~\ref{app:extended-discussion}).

{\bf Threshold-HE realisation.}
The Threshold-HE implementation results are also reported in Table~\ref{tab:policy-to-cryptography}. This implementation uses native group, range, and threshold-decryption proofs instead of an EZKL circuit. For 3-of-3 decryption, client encryption and evidence generation require $2.491\pm0.019$~s and yield 199.98~KiB of decoded evidence within a 622.54~KiB submission. The aggregator operation, which includes leader aggregation and three helper authorisation/share-generation calls, requires $31.274\pm0.155$~s, while independent saved-round validation requires $7.274\pm0.080$~s. Aggregator-side evidence consists of the three helper-shares and occupies 772.31~KiB. The reported leader-helper exchange involves $9,940.97$~KiB sent and $772.31$~KiB received. Circuit rows and the EZKL capacity parameter $k$ are not applicable to this native realisation, and its different proof mechanism and measurement scope prevent direct comparison with the EZKL results.

\begin{table*}[t]
\centering
\caption{AWS microbenchmarks with $28\!\times\!28$ inputs, five clients and five measured rounds, with both policies enabled. Times are mean $\pm$ sample standard deviation; sizes are means. $\uparrow$/$\downarrow$ denote sent/received payloads.}
\label{tab:policy-to-cryptography}
\scriptsize
\resizebox{\textwidth}{!}{%
\begin{tabular}{@{}lllrrrrrrl@{}}
\toprule
Type & Participant & Policy & $k$ & Rows & $N$ & \makecell[c]{Gen./op.\\(s)} & \makecell[c]{Server-side\\validation (s)} & \makecell[c]{Evidence\\(KiB)} & \makecell[c]{Payload\\(KiB)} \\
\midrule
\multirow{2}{*}{Vanilla FL} & Client & $\mathsf{Pred}_{\phi_{\mathsf{loss}}}$ & 20 & 455,104 & 25 & 80.349 $\pm$ 0.333 & -- & 41.59 & $\uparrow$ 104.80/client \\
 & Aggregator & $\mathsf{Pred}_{\psi_{\mathsf{FedAvg}}}$ & 21 & 1,619,920 & 5 & 121.175 $\pm$ 0.506 & 8.900 $\pm$ 0.043 & 1103.54 & \makecell[l]{$\downarrow$ 523.99/cohort\\$\uparrow$ 309.94/cohort} \\
\midrule
\multirow{2}{*}{Continual FL} & Client & $\mathsf{Pred}_{\phi_{\mathsf{EWC}}}$ & 22 & 3,320,727 & 25 & 541.652 $\pm$ 73.508 & -- & 2188.70 & $\uparrow$ 2,283.06/client \\
 & Aggregator & $\mathsf{Pred}_{\psi_{\mathsf{FedAvg}}}$ & 22 & 3,229,170 & 5 & 568.800 $\pm$ 2.764 & 28.332 $\pm$ 0.218 & 2176.80 & \makecell[l]{$\downarrow$ 11,415.32/cohort\\$\uparrow$ 310.47/cohort} \\
\midrule
\multirow{2}{*}{Personalized FL} & Client & $\mathsf{Pred}_{\phi_{\mathsf{pFedMe}}}$ & 22 & 3,102,752 & 25 & 640.166 $\pm$ 17.073 & -- & 2189.16 & $\uparrow$ 2,283.59/client \\
 & Aggregator & $\mathsf{Pred}_{\psi_{\mathsf{pFedMe}}}$ & 22 & 3,230,605 & 5 & 523.948 $\pm$ 0.765 & 28.412 $\pm$ 0.152 & 2174.72 & \makecell[l]{$\downarrow$ 11,417.94/cohort\\$\uparrow$ 311.51/cohort} \\
\midrule
\multirow{2}{*}{Threshold HE FL} & Client & $\mathsf{Pred}_{\phi_{\mathsf{HE\text{-}L2}}}$ & -- & -- & 25 & 2.491 $\pm$ 0.019 & -- & 199.98 & $\uparrow$ 622.54/submission \\
 & Aggregators & $\mathsf{Pred}_{\psi_{\mathsf{HE\text{-}FedAvg}}}$ & -- & -- & 5 & 31.274 $\pm$ 0.155 & 7.274 $\pm$ 0.080 & 772.31 & \makecell[l]{$\uparrow$ 9,940.97/requests\\$\downarrow$ 772.31/shares} \\
\bottomrule
\end{tabular}%
}
\end{table*}

\section{Discussion}\label{sec:discussion}
\proto\ separates FL type, policy semantics, and cryptographic realisation, which allows learning requirements and enforcement mechanisms to evolve within a unified framework. The four instantiations demonstrate modularity across various learning paradigms and participant topologies, including multi-aggregator Threshold HE. The process of verifying selected training and aggregation properties balances verification coverage and proving cost, with the latter determined by policy encoding rather than solely by the number of checks. This separation gives machine learning researchers a common interface to specify verifiable properties and cryptographers a common interface to develop compatible enforcement mechanisms. Finally, the guarantees remain policy-relative and adhere to the threat model described in Sec.~\ref{sec:threat-model}, including aggregation correctness with respect to the recorded admitted set. Ensuring confidentiality of individual contributions requires appropriate auxiliary mechanisms when necessitated by the chosen realisation. The ability to replace compatible components establishes a foundation for crypto-agility, including potential post-quantum realisations, although this aspect is not evaluated experimentally. Appendix~\ref{app:extended-discussion} provides further discussion of policy costs, optimisation opportunities, modularity, and adversarial scope.

\section{Related Work}
\label{sec:related-work}

In this work, we consider policy-compliant verifiability in \ac{FL}, where clients provide evidence that their contributions satisfy the applicable client policies and aggregators provide evidence that the admitted contributions were processed according to the prescribed aggregation policies. Recent work systematises verifiability in privacy-preserving \ac{FL} by distinguishing claims about submission, admission, local computation, aggregation, and finalisation~\cite{rizzini2026sok}. We summarise the systems most closely related to our work in Table~\ref{tab:related-work-comparison} (see Appendices), with
reference to Requirements~\textbf{(R1)}--\textbf{(R5)} introduced in Sec.~\ref{sec:requirements}. Requirements~\textbf{(R2)}--\textbf{(R4)} capture verifiable client-policy compliance, verifiable aggregation-policy compliance, and private policy verification, whereas Requirements~\textbf{(R1)} and \textbf{(R5)} capture type-aware policy configuration and cryptographic modularity.

Several works combine secure aggregation with privacy-preserving validation of client contributions. EIFFeL evaluates programmable validity predicates using secret sharing, secret-shared proofs, and verifiable secret sharing, and aggregates only inputs that satisfy the selected predicate~\cite{chowdhury2022eiffel}. RoFL uses homomorphic commitments and zero-knowledge arguments to enforce norm bounds on protected updates~\cite{lycklama2023rofl}, while ACORN combines secure
aggregation, vector commitments, and Bulletproofs to verify $L_0$, $L_2$, and $L_\infty$ constraints~\cite{bell2023acorn}. Similarly, BPFL and NoV encode magnitude, distance, and direction conditions into privacy-preserving verification procedures~\cite{nie2024bpfl,xing2024novandalism}. These systems provide
verifiable update admissibility for concrete validity or robustness conditions. However, they do not establish that the declared local training procedure produced the submitted contribution, and their predicates and aggregation procedures are designed for a particular \ac{FL} workflow.

A second line of work verifies the computation used to produce a local model. PZKP-FL represents local training and aggregation as zk-SNARK circuits and uses a smart contract for proof verification~\cite{xing2023pzkpfl}. Federify combines zk-SNARKs, threshold ElGamal encryption, and blockchain contracts to verify a
specific Gaussian Naive Bayes training and aggregation procedure~\cite{keshavarzkalhori2024federify}. VeriDP proves the gradient, clipping, averaging, and noise-generation steps of DP-SGD~\cite{abdolmaleki2026veridp}, whereas VerifiableFL/ExclaveFL uses integrity-protected execution and remote attestation to construct an attested dataflow over training and aggregation tasks~\cite{guo2024verifiablefl}. These approaches provide stronger evidence about procedure execution, but their verification is bound to a particular learning algorithm, circuit, or trusted-execution substrate.

Other works mainly address deviations by the aggregation service. VerifyNet allows clients to check the algebraic consistency of a returned aggregate using homomorphic hashing, masking, and secret sharing~\cite{xu2020verifynet}. However, under Requirement~\textbf{(R3)}, this does not prove the complete aggregation
procedure over an explicitly identified admitted set. zkFL instead uses commitments, signatures, and Halo2 proofs to certify aggregation over committed client contributions~\cite{wang2025zkfl}. Takagi and Hasegawa combine secure aggregation, trusted execution, and client auditing to verify the actions of a stateful private-\ac{FL} server~\cite{takagi2026securing}. These systems protect against different forms of aggregator deviation, but each considers a prescribed
aggregation workflow. Overall, existing proposals provide different combinations of update validation, local-computation verification, aggregation verification,
and privacy. EIFFeL is particularly close to our work regarding programmable client-side predicates, while PZKP-FL, Federify, and VerifiableFL provide evidence for both local and aggregation computations. Our distinction is therefore not that prior work lacks configurable checks or verifiable computation. Rather, to the best of our knowledge, none of the systems considered provides a common abstraction in which both client and aggregation requirements are represented as policy-dependent relations, the selected \ac{FL} type, topology, and state semantics form part of the policy context (Requirement~\textbf{(R1)}), and the policy semantics are separated from the compatible cryptographic mechanisms used to enforce them (Requirement~\textbf{(R5)}). Our instantiations apply this abstraction
across vanilla, personalised, continual, and threshold-encrypted \ac{FL} (multiple aggregator topology-setting). The resulting guarantees remain
policy-relative, since the framework proves the execution of \texttt{LocalTrain} only when that computation is explicitly encoded by the selected policy, and aggregation correctness is established relative to the recorded admitted set. In tandem, our solution provides crypto-agility, i.e., by preserving the same policy while interchanging the underlying cryptographic scheme provided that schemes compose together.

\section{Conclusion}\label{sec:conclusion}
In this paper, we introduced \proto, a framework that allows the definition of policies based on which policy-compliant \ac{FL} settings can be instantiated. To this end, we provided a thorough formalisation of \proto\ by specifying the necessary components and abstractions to support \emph{modularity} and \emph{crypto-agility}. In addition, we demonstrated the versatility of the framework through several \ac{FL} instantiations, namely Vanilla FL, Continual FL, Personalised FL, and Threshold HE-based FL. In tandem, we provided a thorough security analysis showing that \proto\ achieves three main security properties: (i) update admissibility, (ii) aggregation correctness relative to the recorded set, and (iii) policy-relative privacy.
Finally, to demonstrate that \proto\ is not only a theoretical construction, we implemented proof-of-concept realisations of all presented instantiations, benchmarked their performance, and reported the resulting computational and verification overheads.


\section*{Open Science}
We provide an anonymised prototype and scripts for reproducing the
experiments at
\url{https://anonymous.4open.science/r/Verifiable-FL-Framework-E6BB/}.
The artefact contains the implementations used in the evaluation together
with scripts for reproducing the reported timing, proof-size, payload, and
circuit measurements.

\section*{LLM usage considerations}
The authors used AI-based tools to revise text, improve flow, check grammar, remove awkward phrasings and check notation consistency. We also used AI-based tools to generate some icons for Fig.~\ref{fig:system-model}. During artefact development, AI-based tools were used for implementation support, including scripting, debugging, resolving syntax and integration issues, and assisting with the processing and visualization of experimental data. In particular, such assistance was used when integrating the Flower and EZKL frameworks. All technical decisions, implementations, experimental configurations, analyses, and conclusions were reviewed and validated by the authors. The authors remain fully responsible for the accuracy, originality, and integrity of the final work, including its text, citations, definitions, proofs, implementation, and experimental results.

\bibliographystyle{ieeetr}
\bibliography{biblio}

\appendices

\begin{table*}[t!]
\caption{Comparison of verifiable federated-learning systems with respect to
the policy-compliance Requirements~\textbf{(R1)}--\textbf{(R5)} defined in
Sec.~\ref{sec:requirements}. A filled circle indicates explicit support under
the definitions below.}
\label{tab:related-work-comparison}
\vspace{-0.25cm}
\centering

\resizebox{\textwidth}{!}{%
{\fontsize{6.55}{7.45}\selectfont
\setlength{\tabcolsep}{3.8pt}
\renewcommand{\arraystretch}{1.17}

\begin{tabular}{@{}llccccc@{}}
\toprule

\multirow{2}{*}{}
&
\multirow{2}{*}{
  \makecell{
    \textbf{Techniques}
  }
}
&
\multicolumn{2}{c}{\textbf{Policy-compliance guarantees}}
&
\multicolumn{1}{c}{\textbf{Privacy guarantee}}
&
\multicolumn{2}{c}{\textbf{Framework modularity}}
\\

\cmidrule(lr){3-4}
\cmidrule(lr){5-5}
\cmidrule(lr){6-7}

&
&
\makecell{
  \textbf{(R2) Verifiable}\\
  \textbf{client-policy}\\
  \textbf{compliance}
}
&
\makecell{
  \textbf{(R3) Verifiable}\\
  \textbf{aggregation-policy}\\
  \textbf{compliance}
}
&
\makecell{
  \textbf{(R4) Private policy}\\
  \textbf{verification}
}
&
\makecell{
  \textbf{(R1) Type-aware}\\
  \textbf{policy configuration}
}
&
\makecell{
  \textbf{(R5) Cryptographic}\\
  \textbf{modularity and agility}
}
\\

\midrule

VerifyNet~\cite{xu2020verifynet}
& \makecell[l]{\acs{SecAgg}, \acs{SS}, homomorphic hashing}
& \emptycirc
& \emptycirc$^{*}$
& \fullcirc
& \emptycirc
& \emptycirc
\\

EIFFeL~\cite{chowdhury2022eiffel}
& \makecell[l]{\acs{SS}, \acs{VSS}, \acs{ZKP}s}
& \fullcirc
& \emptycirc
& \fullcirc
& \emptycirc
& \emptycirc
\\

RoFL~\cite{lycklama2023rofl}
& \makecell[l]{\acs{SecAgg}, commitments, \acs{ZKP}s}
& \fullcirc
& \emptycirc
& \fullcirc
& \emptycirc
& \emptycirc
\\

ACORN~\cite{bell2023acorn}
& \makecell[l]{\acs{SecAgg}, commitments, \acs{ZKP}s}
& \fullcirc
& \emptycirc
& \fullcirc
& \emptycirc
& \emptycirc
\\

RiseFL~\cite{zhu2024risefl}
& \makecell[l]{\acs{SecAgg}, commitments, \acs{VSS}, \acs{ZKP}s}
& \fullcirc
& \emptycirc
& \fullcirc
& \emptycirc
& \emptycirc
\\

zkFL~\cite{wang2025zkfl}
& \makecell[l]{\acs{ZKP}s, commitments, signatures, optional blockchain}
& \emptycirc
& \fullcirc
& \fullcirc
& \emptycirc
& \emptycirc
\\

PZKP-FL~\cite{xing2023pzkpfl}
& \makecell[l]{\acs{zkSNARK}s, \acs{HE}, \acs{SecAgg}, blockchain}
& \fullcirc
& \fullcirc
& \fullcirc
& \emptycirc
& \emptycirc
\\

VPFL~\cite{ma2024vpfl}
& \makecell[l]{\acs{ZKP}s, commitments, Merkle tree}
& \fullcirc
& \fullcirc
& \fullcirc
& \emptycirc
& \emptycirc
\\

Federify~\cite{keshavarzkalhori2024federify}
& \makecell[l]{\acs{zkSNARK}s, threshold \acs{HE}, blockchain}
& \fullcirc
& \fullcirc
& \fullcirc
& \emptycirc
& \emptycirc
\\

BPFL~\cite{nie2024bpfl}
& \makecell[l]{\acs{zkSNARK}s, \acs{HE}-based masking}
& \fullcirc
& \emptycirc
& \fullcirc
& \emptycirc
& \emptycirc
\\

NoV~\cite{xing2024novandalism}
& \makecell[l]{\acs{NIZK}s, \acs{VSS}, \acs{SecAgg}}
& \fullcirc
& \fullcirc
& \fullcirc
& \emptycirc
& \emptycirc
\\

VerifiableFL/ExclaveFL~\cite{guo2024verifiablefl}
& \makecell[l]{\acs{TEE}/exclaves, \acs{RA}}
& \emptycirc
& \fullcirc
& \fullcirc$^{\diamond}$
& \emptycirc
& \emptycirc
\\

Sentinel~\cite{zhang2025sentinel}
& \makecell[l]{\acs{TEE}, \acs{RA}, instrumentation}
& \fullcirc
& \emptycirc
& \fullcirc
& \emptycirc
& \emptycirc
\\

Takagi--Hasegawa~\cite{takagi2026securing}
& \makecell[l]{\acs{DP}, \acs{TEE}, \acs{SecAgg}, client auditing}
& \emptycirc
& \fullcirc
& \fullcirc
& \emptycirc
& \emptycirc
\\

VeriDP~\cite{abdolmaleki2026veridp}
& \makecell[l]{\acs{DP}, \acs{ZKP}s, commitments, \acs{IVC}}
& \fullcirc
& \emptycirc
& \fullcirc
& \emptycirc
& \emptycirc
\\

\midrule

\textbf{Our framework}
& \makecell[l]{Commitments, \acs{NIZK}s, optional $\Aux$~components}
& \fullcirc
& \fullcirc
& \fullcirc
& \fullcirc
& \fullcirc
\\

\bottomrule
\end{tabular}
}}
\vspace{0.6ex}

\begin{minipage}{0.99\textwidth}
\footnotesize
\textit{Legend.}
\fullcirc{} indicates that the cited work explicitly satisfies the corresponding requirement, whereas \emptycirc{} indicates that the requirement is not explicitly supported under the definitions used here.

$^{*}$ VerifyNet checks algebraic consistency of the returned aggregate but does not bind it to an explicitly identified admitted set and prescribed aggregation rule under the strict definition of Requirement~\textbf{(R3)}. $^{\diamond}$ VerifiableFL allows an auditor to verify attested data transformations without receiving the underlying plaintext data and exclaves provide execution integrity rather than confidentiality, and secure aggregation remains complementary.
\end{minipage}

\vspace{-0.1cm}
\end{table*}

\section{Additional Discussion}\label{app:extended-discussion}
This appendix expands on Section~\ref{sec:discussion}, elaborating on the practical considerations, limitations, and future directions of \proto.

{\bf Policy Evaluation.}
Table~\ref{tab:policy-to-cryptography} shows that Vanilla FL produces the smallest EZKL circuits. Continual and Personalised FL require larger circuits because their policies include additional EWC/Fisher and other checks, as defined in Sec.~\ref{sec:Instantiation} and Appendix~\ref{app:personalised-instantiation}. Their aggregation circuits are also larger because they verify both the client update and the shared global model. Personalised FL additionally verifies the sampled client set and the server-side mixing parameters. As a result, its additional policy inputs lead to aggregation circuits with approximately $2\times$ more rows than Vanilla FL.

The number of circuit rows therefore depends on how a policy is implemented and does not directly represent the number of policy checks. This also leaves room for optimisation. For example, repeated inputs may be removed, commitments may be reused, or less data may be included in the circuit. These optimisations can reduce the circuit size and the proving time, especially when they allow the use of a smaller $k$.

The policy-based design further reduces the amount of computation that needs to be verified. Instead of proving the correctness of the complete local training process, \proto\ verifies only the training and aggregation properties defined by the policy. This reduces unnecessary proving cost while still checking the properties required for admission. Common policies can also be reused across different \ac{FL} settings, while additional checks can be added when required by a specific setting. Hence, \proto\ allows a trade-off between how much of the computation is verified and the cost of generating the corresponding proofs.

{\bf Modularity.}
We showcase the modularity of our framework through the different instantiations. In particular, the Threshold HE FL instantiation demonstrates how Def.~\ref{def:aux} can be leveraged to incorporate additional cryptographic primitives required by a specific \ac{FL} setting. Together, the instantiations show that \proto\ can accommodate different \ac{FL} functionalities and cryptographic components without changing the core framework.

The instantiations further demonstrate the topological versatility of \proto. While the Vanilla, Continual, and Personalised FL instantiations follow the classical client-server topology, Threshold HE FL requires multiple aggregators to realise the threshold functionality. This shows that the framework is not restricted to a single aggregator or to the conventional client–server setting. Instead, the roles defined by \proto\ can be distributed across multiple nodes depending on the requirements of the instantiated protocol. Consequently, the framework can also accommodate more decentralised settings in which nodes may assume client, aggregation, or combined roles. This further strengthens the modular nature of \proto, as both the cryptographic components and the underlying network topology can be adapted to the target \ac{FL} setting.

{\bf Bridge between ML and Applied Cryptography.}
\proto\ provides a common abstraction between \ac{ML} and applied cryptography. From the \ac{ML} perspective, researchers can develop and optimise training algorithms, personalisation mechanisms, continual-learning techniques, and aggregation rules to improve local and global model performance. These properties can then be expressed through the policy layer of \proto\ to specify which aspects of the resulting computation should be verifiable.

From the applied-cryptography perspective, researchers can independently optimise the underlying primitives used to enforce these policies, including commitment schemes, proof systems, and secure-aggregation mechanisms. Improvements to either side can therefore be incorporated without redesigning the complete framework. In this way, \proto\ provides an interface between the two communities, as \ac{ML} determines the properties and computations that should be enforced, while applied cryptography provides the mechanisms for securely and verifiably enforcing them. We view this separation as a first step towards enabling both communities to jointly develop more efficient and expressive verifiable \ac{FL} systems.

{\bf Crypto-Agility.}
We do not experimentally evaluate the crypto-agility of \proto. However, our formal definition and specification of \proto\ are independent of a particular commitment scheme or NIZK construction. Hence, different schemes can be deployed, provided that their scheme-specific requirements are handled within the corresponding instantiation. This abstraction is particularly relevant during the ongoing transition from classical to post-quantum cryptography, as it allows cryptographic components to be replaced or upgraded without redesigning the overall framework, thereby supporting quantum-readiness.

\textbf{Adaptability to Adversarial Settings.}
\proto\ supports a range of adversarial settings through configurable client and aggregation policies alongside compatible cryptographic mechanisms. For scenarios involving malicious clients, policies specific to the chosen \ac{FL} type enforce admission conditions designed to mitigate poisoned or manipulated updates. For malicious aggregators, policies mandate that each recorded and admitted contribution is processed exactly once according to the prescribed aggregation rule and weights. When aggregators must keep individual contributions confidential, auxiliary mechanisms can ensure privacy, as shown by the threshold-\ac{HE} instantiation (Appendix~\ref{app:threshold-he-instantiation}). Consequently, the framework accommodates diverse integrity and confidentiality requirements without changing its core interfaces. The selected policies and relevant adversarial assumptions determine the guarantees, as outlined in Sec.~\ref{sec:threat-model}. Table~\ref{tab:related-work-comparison} in the appendices compares these capabilities with those of existing approaches.

\section{Additional Framework Instantiations}\label{app:additional-instantiations}

Section~\ref{sec:Instantiation} presents VanillaFL to illustrate stateless and retained-state client policies. We give three additional realisations here. ContinualFL targets forgetting, PersonalizedFL specialises the policy relations to personalised local training, while Threshold-HE illustrates the use of an auxiliary cryptographic mechanism for protected aggregation.

\subsection{Continual Federated Learning}
We instantiate continual federated learning as $\mathsf{Type}=\mathsf{ContinualFL}$ with clients $\{\client{i}\}_{i\in I}$ and one aggregator $\agg{1}$, using Elastic Weight Consolidation (EWC) as the continual-learning rule~\cite{Kim2024cl, ewc-paper, Aich2021ElasticWC}. This realisation uses the type-specific retained client state supported by
$\mathsf{ContinualFL}$.

For client $\client{i}$, let
\[
s_{i,t-1}
=
\bigl(
\theta^{\mathsf{ref}}_{i,t-1},
F_{i,t-1}
\bigr)
\]
denote its previously accepted continual state, where $\theta^{\mathsf{ref}}_{i,t-1}$ is the client's most recent accepted local model and $F_{i,t-1}$ is the diagonal Fisher information vector computed at
that model.

If a client participates intermittently, the index $t-1$ refers to its previous accepted continual state and need not correspond to the immediately preceding global round. Initially,
\(
s_{i,0}
=
(\theta_0,\mathbf{0}).
\)
The commitment to $s_{i,t-1}$ is retained from the client's previous admitted message. For the initial state, this commitment is created before the first training round. 
In round $t$, the aggregator distributes the current global model $\theta_t$. Client $\client{i}$ holds the current private task dataset
$D_{i,t}$ of size
\(
n_{i,t}
=
|D_{i,t}|
\)
and computes
\[
\begin{aligned}
u_{i,t}
&\gets
\LocalTrain
(\mathsf{ContinualFL},\theta_t,D_{i,t},\Params),
\\
\theta_{i,t}
&=
\theta_t+u_{i,t}.
\end{aligned}
\]

For this realisation, $\LocalTrain$ uses the retained state $s_{i,t-1}$ and optimises the EWC objective
\[
\mathcal{L}^{\mathsf{EWC}}_{i,t}(\theta)
=
\mathsf{Loss}(\theta,D_{i,t})
+
\frac{\lambda}{2}
\sum_{j=1}^{d}
F_{i,t-1,j}
\bigl(
\theta_j-\theta^{\mathsf{ref}}_{i,t-1,j}
\bigr)^2 .
\]

For the resulting local model, define the EWC regularisation term
\[
E_{i,t}
=
\frac{\lambda}{2}
\sum_{j=1}^{d}
F_{i,t-1,j}
\bigl(
\theta_{i,t,j}
-
\theta^{\mathsf{ref}}_{i,t-1,j}
\bigr)^2 .
\]

The client then computes the diagonal empirical Fisher information for the current task at the same local model. The estimator is fixed in $\Params$. For each model parameter $j$,
\[
F_{i,t,j}
=
\frac{1}{n_{i,t}}
\sum_{(x,y)\in D_{i,t}}
\left(
\left.
\frac{\partial}{\partial\theta_j}
\log p_{\theta}(y\mid x)
\right|_{\theta=\theta_{i,t}}
\right)^2 .
\]
We denote this computation by
\[
F_{i,t}
=
\mathsf{DiagFisher}
(\theta_{i,t},D_{i,t},\Params).
\]

The client forms its next continual state as
\(
s_{i,t}
=
(\theta_{i,t},F_{i,t}).
\)
Thus, the local model and Fisher vector retained for the client's next accepted participation are paired at the same model state.

\emph{Client policy.}
The policy $\phi_{\mathsf{EWC}}\in\Phi$ links the current update to the previously committed continual state, enforces the EWC condition, and binds the state retained for the client's next participation. The committed collection is
\[
M_{i,t}
=
(D_{i,t},u_{i,t},s_{i,t}),
\,
(C_{i,t},R_{i,t})
\gets
\Commit
(\mathsf{pp}_{\mathsf{CS}},M_{i,t}).
\]

Let $c^s_{i,t-1}$ and $r^s_{i,t-1}$ denote the retained commitment to $s_{i,t-1}$ and its corresponding private opening. The values $u_{i,t}$, $n_{i,t}$, and $c^s_{i,t-1}$ are public. The values $D_{i,t}$, $s_{i,t-1}$, $s_{i,t}$, $r^s_{i,t-1}$, and $R_{i,t}$ remain private. No auxiliary mechanism is required for this policy.

The client predicate is
\[
\boxed{
\scriptsize
\begin{aligned}
&\mathsf{Pred}_{\phi_{\mathsf{EWC}}}
(x_{i,t},w_{i,t})=1
\Longleftrightarrow
c^s_{i,t-1}
\text{ binds }
\bigl(
\theta^{\mathsf{ref}}_{i,t-1},
F_{i,t-1}
\bigr)
\\
&\qquad\qquad\qquad\qquad\qquad\quad{}\land
C_{i,t}
\text{ binds }
(D_{i,t},u_{i,t},s_{i,t})
\\
&\qquad\qquad\qquad\qquad\qquad\quad{}\land
n_{i,t}=|D_{i,t}|
\\
&\qquad\qquad\qquad\qquad\qquad\quad{}\land
\theta_{i,t}
=
\theta_t+u_{i,t}
\\
&\qquad\qquad\qquad\qquad\qquad\quad{}\land
E_{i,t}
=
\frac{\lambda}{2}
\sum_{j=1}^{d}
F_{i,t-1,j}
\bigl(
\theta_{i,t,j}
-
\theta^{\mathsf{ref}}_{i,t-1,j}
\bigr)^2
\\
&\qquad\qquad\qquad\qquad\qquad\quad{}\land
E_{i,t}
\leq
\tau_{\mathsf{EWC}}
\\
&\qquad\qquad\qquad\qquad\qquad\quad{}\land
F_{i,t}
=
\mathsf{DiagFisher}
(\theta_{i,t},D_{i,t},\Params)
\\
&\qquad\qquad\qquad\qquad\qquad\quad{}\land
s_{i,t}
=
(\theta_{i,t},F_{i,t}).
\end{aligned}
}
\]

The relation therefore certifies that the submitted local model satisfies the EWC bound with respect to the client's previously committed continual state and that the Fisher vector retained for the next participation is computed at the resulting local model. As in the VanillaFL realisation, the policy does not certify the internal execution of $\LocalTrain$.

Following Algorithm~\ref{alg:client}, the client instantiates the generic client procedure with $\mathcal{R}_{\phi_{\mathsf{EWC}}}$ and computes
\[
\pi_{i,t}
\gets
\Pi.\Prove_{\mathcal{R}_{\phi_{\mathsf{EWC}}}}
(\ppnizk,x_{i,t},w_{i,t}).
\]
The client sends
\[
\client{i}\longrightarrow\agg{1}:\,
\mu_{i,t}
=
(i,t,x_{i,t},\pi_{i,t}).
\]

\emph{Aggregation policy.}
Following Algorithm~\ref{alg:aggregator}, $\agg{1}$ verifies the received client messages and records the accepted messages in $\mathsf{Adm}_t$. For each admitted client, the extracted aggregation value is
\[
v_{i,t}
=
(u_{i,t},n_{i,t}),
\,
V_t[i]
=
(u_{i,t},n_{i,t}).
\]
The next-state commitment is not part of the numerical aggregation value. It remains in the admitted client message and is retained as the client's reference-state commitment for its next participation.

ContinualFL reuses the aggregation policy $\psi_{\mathsf{FedAvg}}\in\Phi$ from the VanillaFL realisation. Let
\[
N_t
=
\sum_{i\in I_t^{\mathsf{adm}}}
n_{i,t}.
\]
The prescribed aggregate and next global model are
\[
\Delta_t
=
\sum_{i\in I_t^{\mathsf{adm}}}
\frac{n_{i,t}}{N_t}
u_{i,t},
\qquad
\theta_{t+1}
=
\theta_t+\Delta_t.
\]

Accordingly,
\(
\mathsf{Pred}_{\psi_{\mathsf{FedAvg}}}
(x_t^{\mathsf{agg}},w_t^{\mathsf{agg}})=1
\)
holds when $V_t$ contains exactly the values extracted from $\mathsf{Adm}_t$, $\Delta_t$ is computed according to the prescribed FedAvg rule, and $\theta_{t+1}$ is obtained from $\theta_t$ and
$\Delta_t$.

Following Algorithm~\ref{alg:aggregator}, the aggregator instantiates the generic aggregation procedure with $\mathcal{R}_{\psi_{\mathsf{FedAvg}}}$ and computes
\[
\pi_t^{\mathsf{agg}}
\gets
\Pi.\Prove_{\mathcal{R}_{\psi_{\mathsf{FedAvg}}}}
(\ppnizk,x_t^{\mathsf{agg}},w_t^{\mathsf{agg}}).
\]
Finally, the aggregator returns
\[
\agg{1}\longrightarrow\{\client{i}\}_{i\in I}:\,
\nu_t
=
(1,t,x_t^{\mathsf{agg}},\pi_t^{\mathsf{agg}}).
\]

\subsection{Personalised Federated Learning}\label{app:personalised-instantiation}
We instantiate $\mathsf{Type}=\mathsf{PersonalizedFL}$ with clients $\{\client{i}\}_{i\in I}$ and one aggregator $\agg{1}$ using pFedMe~\cite{dinh2020pfedme}. This realisation specialises the client and aggregation policies to the pFedMe learning rule. The generic commitment, proof, admission, and message construction follow Algorithms~\ref{alg:client} and~\ref{alg:aggregator}.

In round $t$, the aggregator distributes the current global reference model $\theta_t$. Client $\client{i}$ holds a private dataset $D_i$ and constructs a sequence of local reference models
\[
\omega_{i,t,0},\ldots,\omega_{i,t,R},
\qquad
\omega_{i,t,0}=\theta_t.
\]

For each local round $r$, the client computes a private personalised model $\theta^{\mathsf{pers}}_{i,t,r}$ with respect to the proximal objective
\[
\widetilde h_{i,t,r}(v)
=
\mathsf{Loss}_{\mathsf{circ}}(v,D_i)
+
\frac{\lambda_{\mathsf M}}{2}
\left\|
v-\omega_{i,t,r}
\right\|_2^2,
\]
where $\mathsf{Loss}_{\mathsf{circ}}$ is the circuit-compatible empirical loss fixed by $\Params$. The policy requires the resulting personalised point to satisfy the approximate first-order condition
\[
\left\|
\nabla_v
\widetilde h_{i,t,r}
\bigl(\theta^{\mathsf{pers}}_{i,t,r}\bigr)
\right\|_2^2
\leq
\tau_{\mathsf{prox}}.
\]

The local reference model is then updated according to the pFedMe rule
\[
\omega_{i,t,r+1}
\simeq_{\varepsilon_{\mathsf{rec}}}
\omega_{i,t,r}
-
\eta\lambda_{\mathsf M}
\left(
\omega_{i,t,r}
-
\theta^{\mathsf{pers}}_{i,t,r}
\right),
\]
where $\simeq_{\varepsilon_{\mathsf{rec}}}$ denotes coordinate-wise equality up to the fixed-point tolerance prescribed by $\Params$. After $R$ local rounds, the public client contribution is
\[
u_{i,t}
=
\omega_{i,t,R}-\theta_t.
\]
The personalised models remain private and are not numerically aggregated.

\emph{Client policy.}
The policy $\phi_{\mathsf{pFedMe}}\in\Phi$ binds the private personalised and reference-model trajectory to the submitted update. It requires each personalised point to satisfy the prescribed proximal-stationarity condition and each reference-model transition to follow the pFedMe recurrence.

The committed collection contains the client dataset, the private personalised and reference-model trajectory, and the submitted update. The dataset and local trajectory remain private, whereas $u_{i,t}$ is public. The client predicate is
\[
\boxed{
\footnotesize
\begin{aligned}
&\mathsf{Pred}_{\phi_{\mathsf{pFedMe}}}
(x_{i,t},w_{i,t})=1
\Longleftrightarrow\ \\
&\quad C_{i,t}\text{ binds }
\left(
D_i,
\{\omega_{i,t,r}\}_{r=0}^{R},
\{\theta^{\mathsf{pers}}_{i,t,r}\}_{r=0}^{R-1},
u_{i,t}
\right)
\\
&\quad \land \, \omega_{i,t,0}=\theta_t
\\
&\quad \land
\bigwedge_{r=0}^{R-1}
\left\|
\nabla_v
\widetilde h_{i,t,r}
\bigl(\theta^{\mathsf{pers}}_{i,t,r}\bigr)
\right\|_2^2
\leq
\tau_{\mathsf{prox}}
\\
&\quad \land
\bigwedge_{r=0}^{R-1}
\omega_{i,t,r+1}
\simeq_{\varepsilon_{\mathsf{rec}}}
\omega_{i,t,r}
-
\eta\lambda_{\mathsf M}
\left(
\omega_{i,t,r}
-
\theta^{\mathsf{pers}}_{i,t,r}
\right)
\\
&\quad \land
u_{i,t}
=
\omega_{i,t,R}-\theta_t .
\end{aligned}
}
\]

The client relation therefore certifies that the submitted update is consistent with private personalised points satisfying the prescribed proximal-stationarity bound and with the corresponding pFedMe
reference-model updates. It does not require a particular local optimiser or an exact minimiser.

\emph{Aggregation policy.}
For an admitted client, the type-and-policy extraction interface returns
\[
v_{i,t}
=
\mathsf{ExtractAgg}_{\mathsf{PersonalizedFL},
\phi_{\mathsf{pFedMe}}}(x_{i,t})
=
u_{i,t},
\]
and 
\(
V_t[i]
=
u_{i,t}.
\)
Let $S_t$ denote the cohort prescribed for round $t$. The aggregation policy
$\psi_{\mathsf{pFedMe}}\in\Phi$ requires the admitted set to equal $S_t$
and applies the pFedMe server mixing rule
\[
\Delta_t
=
\frac{\beta}{|S_t|}
\sum_{i\in S_t}
u_{i,t},
\,
\theta_{t+1}
=
\theta_t+\Delta_t.
\]
Equivalently,
\[
\theta_{t+1}
=
(1-\beta)\theta_t
+
\frac{\beta}{|S_t|}
\sum_{i\in S_t}
\omega_{i,t,R}.
\]
These operations use the deterministic fixed-point arithmetic prescribed by
$\Params$.

Accordingly, the aggregation predicate is
\[
\boxed{
\begin{aligned}
&\mathsf{Pred}_{\psi_{\mathsf{pFedMe}}}
(x_t^{\mathsf{agg}},w_t^{\mathsf{agg}})=1
\Longleftrightarrow\
I_t^{\mathsf{adm}}
=
S_t
\\
&\quad\qquad\qquad\qquad\qquad\qquad\qquad{}\land
V_t[i]
=
u_{i,t}
\quad
\forall i\in S_t
\\
&\quad\qquad\qquad\qquad\qquad\qquad\qquad{}\land
\Delta_t
=
\frac{\beta}{|S_t|}
\sum_{i\in S_t}
V_t[i]
\\
&\quad\qquad\qquad\qquad\qquad\qquad\qquad{}\land
\theta_{t+1}
=
\theta_t+\Delta_t .
\end{aligned}
}
\]

For evaluation, we instantiate one local reference-model round, a full local batch, and full-cohort participation. These are choices of the evaluated realisation rather than requirements of $\mathsf{PersonalizedFL}$. The policy certifies the round-local pFedMe relation, but not personalised model
accuracy or cross-round continuity of a retained personalised model.

\subsection{Threshold Homomorphic Federated Learning}\label{app:threshold-he-instantiation}
We instantiate $\mathsf{Type}=\mathsf{Threshold-HE}$, motivated by threshold-encrypted FL constructions such as Zhu et al.~\cite{tHEZhu}, with clients $\{\client{i}\}_{i\in I}$ and a set of aggregators
$\{\mathsf{A}^{(\ell)}\}_{\ell\in J}$. This type uses threshold-encrypted aggregation, with the required cryptographic functionality provided through the auxiliary component $\mathsf{HE}\in\Aux$. We instantiate $\mathsf{HE}$ with an additively homomorphic threshold-encryption scheme. Its setup, executed by
Algorithm~\ref{alg:generic-framework-setup-compact}, produces a public encryption key $\mathsf{pk}_{\mathsf{HE}}$ and the private decryption shares required by the aggregators.

For Threshold-HE, we instantiate $\mathsf{HE}$ with 3-of-3 threshold
ElGamal~\cite{elgamal85} using \texttt{elastic-elgamal}~0.3.1~\cite{elasticElGamal} over Ristretto255~\cite{ristretto255}. 
Client validity uses ElGamal-Pedersen consistency proofs and Bulletproofs~\cite{bulletproofs}
range and norm proofs, while each decryption share carries a Chaum-Pedersen~\cite{chaum-pedersen} equality proof.
The research prototype uses trusted-dealer key generation rather than distributed key generation (DKG)~\cite{gennaro1999dkg}.
In round $t$, client $\client{i}$ computes
\[
u_{i,t}
\gets
\LocalTrain
(\mathsf{Threshold\text{-}HE},\theta_t,D_{i,t},\Params)
\]
and encodes the update as an integer vector
\[
q_{i,t}
=
\Encode(u_{i,t},\Params).
\]
The public parameters fix a coordinate bound $Q$ and a squared-norm bound
$B_q^2$.

\emph{Client policy.}
The client policy
$\phi_{\mathsf{HE\text{-}L2}}\in\Phi$ requires the encrypted aggregation value to correspond to the same encoded update bound by the commitment and to satisfy the prescribed coordinate and squared-norm bounds. The committed collection is
\[
M_{i,t}
=
(q_{i,t}),
\,
(C_{i,t},R_{i,t})
\gets
\Commit
(\mathsf{pp}_{\mathsf{CS}},q_{i,t}).
\]

The selected auxiliary HE mechanism encrypts the same encoded update:
\[
E_{i,t}
=
\mathsf{Enc}
(\mathsf{pk}_{\mathsf{HE}},q_{i,t};\eta_{i,t}),
\]
where $E_{i,t}$ is public auxiliary output and the encryption randomness
$\eta_{i,t}$ remains private.

The client predicate is
\[
\boxed{
\begin{aligned}
&\mathsf{Pred}_{\phi_{\mathsf{HE\text{-}L2}}}
(x_{i,t},w_{i,t})=1
\Longleftrightarrow
C_{i,t}\text{ binds }q_{i,t}
\\
&\qquad\qquad\qquad\qquad\qquad\quad{}\land
E_{i,t}
=
\mathsf{Enc}
(\mathsf{pk}_{\mathsf{HE}},q_{i,t};\eta_{i,t})
\\
&\qquad\qquad\qquad\qquad\qquad\quad{}\land
|q_{i,t,j}|\leq Q
\quad\text{for every }j
\\
&\qquad\qquad\qquad\qquad\qquad\quad{}\land
\|q_{i,t}\|_2^2\leq B_q^2 .
\end{aligned}
}
\]

The relation therefore binds the encrypted aggregation value to the committed encoded update and certifies the prescribed bounds without revealing $q_{i,t}$. Following Algorithm~\ref{alg:client}, the client instantiates the generic client procedure with $\mathcal{R}_{\phi_{\mathsf{HE\text{-}L2}}}$ and computes
\[
\pi_{i,t}
\gets
\Pi.\Prove_{\mathcal{R}_{\phi_{\mathsf{HE\text{-}L2}}}}
(\ppnizk,x_{i,t},w_{i,t}).
\]
The client sends, to the executing aggregator $\agg{\ell}$,
\[
\client{i}
\longrightarrow
\agg{\ell}:\,
\mu_{i,t}
=
(i,t,x_{i,t},\pi_{i,t}).
\]

\emph{Aggregation policy.}
For every admitted client, the type-and-policy extraction interface returns
\[
v_{i,t}
=
\mathsf{ExtractAgg}_{\mathsf{Threshold\text{-}HE}
\phi_{\mathsf{HE\text{-}L2}}}
(x_{i,t})
=
(E_{i,t},n_i),
\]
and
\(
V_t[i]
=
(E_{i,t},n_i),
\)
where $n_i$ is the public aggregation weight fixed in $\Params$. Let
\[
N_t
=
\sum_{i\in I_t^{\mathsf{adm}}}
n_i.
\]

The aggregation policy
$\psi_{\mathsf{HE\text{-}FedAvg}}\in\Phi$ applies the prescribed weighted aggregation to the encrypted client values. The homomorphic aggregation procedure prescribed by $\mathsf{Threshold\text{-}HE}$ computes the encrypted weighted numerator
\[
E_t^{\mathsf{sum}}
=
\mathsf{Eval}\!\left(
\mathsf{pk}_{\mathsf{HE}},
\{(n_i,E_{i,t})\}_{i\in I_t^{\mathsf{adm}}}
\right).
\]

The required aggregators produce threshold-decryption shares, and their
combination yields
\[
s_t
=
\sum_{i\in I_t^{\mathsf{adm}}}
n_i q_{i,t}.
\]
Using the public encoding rule, the plaintext aggregate and next global
model are
\[
\Delta_t
=
\mathsf{DecodeAvg}(s_t,N_t,\Params),
\,
\theta_{t+1}
=
\theta_t+\Delta_t.
\]

The aggregation policy $\psi_{\mathsf{HE\text{-}FedAvg}}\in\Phi$ requires that $V_t$ contain exactly the ciphertexts and weights extracted from $\mathsf{Adm}_t$, that $E_t^{\mathsf{sum}}$ is the prescribed homomorphic evaluation, that the threshold-decryption shares are valid, and that decoding the resulting
plaintext aggregate yields the published $\theta_{t+1}$. The public decryption objects and their verification material are supplied through the selected auxiliary component as prescribed by Algorithm~\ref{alg:aggregator}.

Following Algorithm~\ref{alg:aggregator}, the executing aggregator instantiates the generic aggregation procedure with $\mathcal{R}_{\psi_{\mathsf{HE\text{-}FedAvg}}}$ and computes
\[
\pi_t^{\mathsf{agg}}
\gets
\Pi.\Prove_{\mathcal{R}_{\psi_{\mathsf{HE\text{-}FedAvg}}}}
(\ppnizk,x_t^{\mathsf{agg}},w_t^{\mathsf{agg}}).
\]
The aggregator returns
\[
\agg{\ell}
\longrightarrow
\{\client{i}\}_{i\in I}:\,
\nu_t
=
(\ell,t,x_t^{\mathsf{agg}},\pi_t^{\mathsf{agg}}).
\]

\section{Security Definitions and Proofs}
\label{app:security-proofs}

This appendix provides the formal definitions, security games, and proofs supporting the results stated in Sec.~\ref{sec:security-analysis}. We first state the cryptographic notions used by the reductions, then introduce the common notation used in the framework-level games, and finally give the experiments and proofs for update admissibility, aggregation correctness, and policy-relative privacy.

\subsection{Cryptographic Notions}\label{app:crypto-notions}
We recall the cryptographic notions used in the generic security proofs. They are defined independently of any particular FL policy or instantiation. Auxiliary primitives are realisation-specific and are considered separately. Any additional guarantee obtained from such a primitive relies on the corresponding security assumptions stated by that instantiation.

\emph{(i) Soundness.}
Let $\mathcal{R}$ be a relation supported by $\Pi$, and let
$\mathcal{L}_{\mathcal{R}}$ denote its corresponding language.
For a $\mathsf{PPT}$ adversary $\mathcal{B}$, define
\[
\operatorname{Adv}^{\mathsf{snd}}_{\Pi,\mathcal{B}}(\lambda)
=
\Pr\left[
\begin{array}{l}
\mathsf{pp}_{\Pi}\gets\mathsf{Setup}_{\Pi}(1^\lambda);\\
(x,\pi)\gets\mathcal{B}(\mathsf{pp}_{\Pi}):\\
x\notin\mathcal{L}_{\mathcal{R}}
\ \land\
\Pi.\Verify_{\mathcal{R}}(\mathsf{pp}_{\Pi},x,\pi)=1
\end{array}
\right]
\]

\emph{(ii) Zero knowledge.}
Let $\mathcal{R}$ be a relation supported by $\Pi$, and let $(\mathsf{SimSetup},\mathsf{SimProve})$ denote the simulator of $\Pi$.
Let $\mathcal{B}=(\mathcal{B}_1,\mathcal{B}_2)$ be a two-stage $\mathsf{PPT}$ adversary. After receiving the public parameters, $\mathcal{B}_1$ selects a statement--witness pair $(x,w)$ together with state
$\mathsf{st}$. The experiment proceeds only when $(x,w)\in\mathcal{R}$.

In the real experiment, the public parameters and proof are generated honestly. In the simulated experiment, they are generated using the simulator, as shown in Fig.~\ref{fig:zk-real-sim}. The zero-knowledge advantage of $\mathcal{B}$ is

\[
\begin{aligned}
\operatorname{Adv}^{\mathsf{zk}}_{\Pi,\mathcal{B}}(\lambda)
& =
\Bigl|
\Pr\!\bigl[
\mathsf{ZKReal}_{\Pi,\mathcal{B}}(1^\lambda)=1
\bigr]
\\
&\qquad\quad {}-
\Pr\!\bigl[
\mathsf{ZKSim}_{\Pi,\mathcal{B}}(1^\lambda)=1
\bigr]
\Bigr|.
\end{aligned}
\]
\begin{figure}[!h]
\centering
\scriptsize
\setlength{\fboxsep}{3pt}
\setlength{\tabcolsep}{3pt}
\renewcommand{\arraystretch}{1.05}

\fbox{%
\begin{minipage}{0.96\columnwidth}

\begin{tabularx}{\linewidth}{
    @{}>{\arraybackslash}X
    |
    >{\arraybackslash}X@{}
}

\textbf{$\mathsf{ZKReal}_{\Pi,\mathcal{B}}(1^\lambda)$}
&
\textbf{$\mathsf{ZKSim}_{\Pi,\mathcal{B}}(1^\lambda)$}
\\[2pt]
\hline
\\[-5pt]

\begin{minipage}[t]{0.94\linewidth}
\raggedright

\textbf{Setup.}
Challenger $\mathcal{C}$ computes
\[
\mathsf{pp}_{\Pi}
\gets
\mathsf{Setup}_{\Pi}(1^\lambda).
\]

\textbf{Statement selection.}\\
The first-stage adversary outputs
\[
(x,w,\mathsf{st})
\gets
\mathcal{B}_1(\mathsf{pp}_{\Pi}).
\]

\textbf{Admissibility check.}
If
\[
(x,w)\notin\mathcal{R},
\]
\textbf{Return} $0$ and halt.

\textbf{Proof generation.}\\
Challenger $\mathcal{C}$ computes
\[
\pi
\gets
\Pi.\Prove_{\mathcal{R}}(\mathsf{pp}_{\Pi},x,w).
\]

\textbf{Distinguishing.}\\
The second-stage adversary outputs
\[
b'
\gets
\mathcal{B}_2(\mathsf{st},x,\pi).
\]

\textbf{Return} $b'$.

\end{minipage}

&

\begin{minipage}[t]{0.94\linewidth}
\raggedright

\textbf{Setup.}
Challenger $\mathcal{C}$ computes
\[
(\widetilde{\mathsf{pp}}_{\Pi},\tau)
\gets
\mathsf{SimSetup}(1^\lambda).
\]

\textbf{Statement selection.}\\
The first-stage adversary outputs
\[
(x,w,\mathsf{st})
\gets
\mathcal{B}_1(\widetilde{\mathsf{pp}}_{\Pi}).
\]

\textbf{Admissibility check.}
If
\[
(x,w)\notin\mathcal{R},
\]
\textbf{Return} $0$ and halt.

\textbf{Proof generation.}\\
Challenger $\mathcal{C}$ computes
\[
\widetilde{\pi}
\gets
\mathsf{SimProve}_{\mathcal{R}}(\tau,x).
\]

\textbf{Distinguishing.}\\
The second-stage adversary outputs
\[
b'
\gets
\mathcal{B}_2(\mathsf{st},x,\widetilde{\pi}).
\]

\textbf{Return} $b'$.

\end{minipage}

\end{tabularx}

\end{minipage}
}

\caption{Real and simulated zero-knowledge experiments.}
\label{fig:zk-real-sim}
\end{figure}
\emph{(iii) Commitment hiding.}
Let $\mathcal{B}=(\mathcal{B}_1,\mathcal{B}_2)$ be a two-stage $\mathsf{PPT}$ adversary. The hiding experiment proceeds as follows: Challenger computes
\(
\mathsf{pp}_{\mathsf{CS}}
\gets
\mathsf{Setup}_{\mathsf{CS}}(1^\lambda)
\,\text{and}\,
(M_0,M_1,\mathsf{st})
\gets
\mathcal{B}_1(\mathsf{pp}_{\mathsf{CS}}),
\)
where $M_0$ and $M_1$ are valid messages in the commitment scheme's message space. The challenger then samples
\(
b\gets\{0,1\},
\) computes
\(
(C_b,R_b)
\gets
\Commit(\mathsf{pp}_{\mathsf{CS}},M_b).
\)
The adversary outputs
\[
b'
\gets
\mathcal{B}_2(\mathsf{st},C_b).
\]
The hiding advantage for $\mathsf{CS}$ is
\[
\operatorname{Adv}^{\mathsf{hid}}_{\mathsf{CS},\mathcal{B}}(\lambda)
=
\left|2\Pr[b'=b]-1\right|.
\]

The generic reductions for Properties~\textbf{(P1)}--\textbf{(P3)} use soundness and zero knowledge of $\Pi$ and hiding of $\mathsf{CS}$ as specified below. Any additional guarantees supplied by optional auxiliary mechanisms rely on the primitive-specific assumptions of the corresponding realisation.

\subsection{Security Game Notions and Notation}\label{app:security-game-notions}
Before defining the framework-level security experiments, we introduce the verification procedures, semantic validity predicates, and helper notation shared by the three games. These notions allow the games to distinguish efficient protocol acceptance from the underlying policy-compliance condition that acceptance is intended to certify.
The security games distinguish between \emph{verification} and \emph{semantic validity}. Verification procedures are the efficient checks performed by honest parties during protocol execution. Validity predicates capture the semantic conditions that an accepted client message, recorded set, or aggregation result is intended to satisfy. We use the namespace $\Framework.\mathsf{\cdot}$ for the framework-level operations, predicates, and helper notation used in the games. To make the security definitions easier to follow, Fig.~\ref{fig:security-game-notation} summarises the framework-level operations, semantic validity predicates, and other notation used in the games. 
\begin{figure}[t]
\centering
\footnotesize
\setlength{\fboxsep}{5pt}

\fbox{%
\begin{minipage}{0.94\columnwidth}

\underline{\textbf{Framework operations:}}
\begin{itemize}[leftmargin=1.3em,labelsep=0.4em,itemsep=2pt,topsep=3pt]

    \item[$\diamond$]
    \underline{\(\Framework.\mathsf{Setup}\)}:
    denotes the generic framework setup procedure, with the fixed framework instance, participant sets, and realisation specification suppressed from the notation.

    \item[$\diamond$]
    \underline{\(\Framework.\mathsf{VfyClient}\)}:
    efficiently verifies a client message against the active round context.

    \item[$\diamond$]
    \underline{\(\Framework.\mathsf{VfyAgg}\)}:
    efficiently verifies an aggregation message against the active round context, aggregation policy, and admitted-message record.

\end{itemize}

\medskip
\underline{\textbf{Semantic validity predicate:}}
\begin{itemize}[leftmargin=1.3em,labelsep=0.4em,itemsep=2pt,topsep=3pt]

    \item[$\blacksquare$]
    \underline{\(\Framework.\mathsf{ValidClient}\)}:
    states that a client message is semantically valid under the applicable client policy.

    \item[$\blacksquare$]
    \underline{\(\Framework.\mathsf{ValidRec}\)}:
    states that the admitted-message record contains distinct client identifiers and only semantically valid client messages.

    \item[$\blacksquare$]
    \underline{\(\Framework.\mathsf{ValidAgg}\)}:
    states that the admitted record is valid and that the aggregation statement satisfies the applicable aggregation policy.

\end{itemize}

\medskip
\underline{\textbf{Security-game helper notation.}}
\begin{itemize}[leftmargin=1.3em,labelsep=0.4em,itemsep=2pt,topsep=3pt]

    \item[$\circ$]
    \underline{\(\Framework.\mathsf{AggOut}\)}:
    denotes the aggregation output prescribed by the framework for a given admitted-message record.

    \item[$\circ$]
    \underline{\(\Framework.\mathsf{BuildClient}\)}:
    denotes the client-side construction of a statement--witness pair \((x,w)\) before NIZK proof generation.

    \item[$\circ$]
    \underline{\(\Framework.\mathsf{CommonCtx}\)}:
    extracts the public fields of a client statement that are required to remain identical in the two privacy-challenge worlds.

\end{itemize}

\end{minipage}%
}

\caption{Framework notation used in the security games.}
\label{fig:security-game-notation}
\end{figure}
The verification procedures
$\Framework.\mathsf{VfyClient}$ and $\Framework.\mathsf{VfyAgg}$ are efficiently computable and correspond to checks performed during protocol execution. In contrast, $\Framework.\mathsf{ValidClient}$, $\Framework.\mathsf{ValidRec}$, and $\Framework.\mathsf{ValidAgg}$ are semantic predicates used only in the security definitions and need not be efficiently decidable. The remaining notation is introduced only to keep the security games concise. The use of these notions and other notation is most easily understood in the context of a single protocol round, which we describe below.

\medskip
\noindent\textbf{One-round interpretation.}
In order to make these notions clear, consider a fixed protocol round $t$. For a client policy $\phi$, let
\(
\kappa_t^\phi
=
(t,\phi,\theta_t,\Params)
\)
denote the active public context for that round.
Suppose a client message
\(
\mu_{i,t}
=
(\hat{i},\hat{t},x_{i,t},\pi_{i,t})
\)
is received over an authenticated channel from client $i$. The identity $i$ is therefore determined by the authenticated channel rather than by the contents of the message. The client verification and validity predicates share the same sender, round, context, and aggregation-value availability checks. Their only substantive difference is the final condition: verification checks whether the submitted NIZK proof is accepted, whereas semantic validity requires the underlying statement to belong to the client-policy language. Figure~\ref{fig:client-vfy-valid} shows the two predicates side by side.
\begin{figure}[!h]
\centering
\footnotesize
\setlength{\tabcolsep}{3pt}
\renewcommand{\arraystretch}{1.08}

\fbox{%
\begin{minipage}{0.96\columnwidth}

\begin{tabularx}{\linewidth}{
    @{}>{\arraybackslash}X
    @{\hspace{4pt}\vrule\hspace{4pt}}
    >{\arraybackslash}X@{}
}
\textbf{$\Framework.\mathsf{VfyClient}$}
&
\textbf{$\Framework.\mathsf{ValidClient}$}
\\[2pt]
\hline
\\[-5pt]

\(
\begin{aligned}
&\Framework.\mathsf{VfyClient}(\kappa_t^\phi,i,\mu_{i,t})=1
\\[1mm]
&\Longleftrightarrow
\\
&\hat{i}=i {}\land\  \\ 
&\hat{t}=t {}\land\ \\
& \mathsf{Ctx}(x_{i,t})=(i,t,\phi,\theta_t,\Params) {}\land\
\\
&\mathsf{ExtractAgg}_{\mathsf{Type},\phi}(x_{i,t})\neq\bot {}\land\
\\
&\boxed{\Pi.\Verify_{\mathcal{R}_\phi}(\ppnizk,x_{i,t},\pi_{i,t})=1}
\end{aligned}
\)
&
\(
\begin{aligned}
&\Framework.\mathsf{ValidClient}(\kappa_t^\phi,i,\mu_{i,t})=1
\\[1mm]
&\Longleftrightarrow
\\
&\hat{i}=i {}\land\ \\ 
&\hat{t}=t {}\land\
\\
&\mathsf{Ctx}(x_{i,t})=(i,t,\phi,\theta_t,\Params) {}\land\
\\
&\mathsf{ExtractAgg}_{\mathsf{Type},\phi}(x_{i,t})\neq\bot {}\land\
\\
&\boxed{x_{i,t}\in\mathcal{L}_{\phi}}
\end{aligned}
\)
\end{tabularx}

\end{minipage}
}

\caption{Client verification and semantic validity for a fixed protocol
round.}
\label{fig:client-vfy-valid}
\end{figure}
By definition of the policy language,
\[
x_{i,t}\in\mathcal{L}_{\phi}
\quad\Longleftrightarrow\quad
\exists\,w_{i,t}:
(x_{i,t},w_{i,t})\in\mathcal{R}_{\phi}.
\]

After client verification, let
\(
\mathsf{Rec}_t
=
\bigl(
(i_1,\mu_{i_1,t}),\ldots,
(i_m,\mu_{i_m,t})
\bigr)
\)
denote the public admitted-message record for the same round, where each $i_j$ is the authenticated sender identity associated with the corresponding message. The record is semantically valid when all recorded client identifiers are pairwise distinct and every recorded client message is semantically valid. 

Thus,
\(
\Framework.\mathsf{ValidRec}
(\kappa_t^\phi,\mathsf{Rec}_t)=1
\)
if and only if the identifiers in $\mathsf{Rec}_t$ are pairwise distinct and
\[
\Framework.\mathsf{ValidClient}
(\kappa_t^\phi,i,\mu_{i,t})=1 
\quad\forall\,
(i,\mu_{i,t})\in\mathsf{Rec}_t.
\]

For this record, define
\(
\Framework.\mathsf{AggOut}_{\mathsf{Type},\psi}
(\theta_t,\mathsf{Rec}_t,\Params)
\)
as the deterministic output obtained by extracting exactly one aggregation value from every recorded message, applying the aggregation rule and weights prescribed by $\mathsf{Type}$ and $\psi$, and then applying the prescribed model-update rule.

Now consider the aggregation message for the same round,
\[
\nu_t
=
(\ell,\hat{t},x_t^{\mathsf{agg}},\pi_t^{\mathsf{agg}}),
\]
where
\(
x_t^{\mathsf{agg}}
=
(\hat{\ell},\tilde{t},\hat{\phi},\hat{\psi},
 \hat{\theta},\widehat{\mathsf{Rec}}_t,
 \theta_{t+1},Z_t^{\mathsf{pub}},\widehat{\Params}).
\)

The framework accepts this aggregation message when all of the following conditions hold:

\begin{enumerate}[label=(\arabic*),leftmargin=1.3em,itemsep=1pt,topsep=2pt]
    \item the aggregation message and statement agree with the active round context:
    \[
    \hat{t}=\tilde{t}=t,\qquad
    \hat{\ell}=\ell,\qquad
    \hat{\phi}=\phi,\qquad
    \hat{\psi}=\psi,
    \]
    \[
    \hat{\theta}=\theta_t,\qquad
    \widehat{\Params}=\Params,\qquad
    \widehat{\mathsf{Rec}}_t=\mathsf{Rec}_t
    \]

    \item the client identifiers in $\mathsf{Rec}_t$ are pairwise distinct

    \item every recorded client message satisfies
    \[
    \Framework.\mathsf{VfyClient}
    (\kappa_t^\phi,i,\mu_{i,t})=1
    \]

    \item the aggregation proof verifies:
    \[
    \Pi.\Verify_{\mathcal{R}_{\psi}}
    (\ppnizk,x_t^{\mathsf{agg}},\pi_t^{\mathsf{agg}})=1.
    \]
\end{enumerate}

We denote these efficient checks by
\[
\Framework.\mathsf{VfyAgg}
(\kappa_t^\phi,\psi,\mathsf{Rec}_t,\nu_t)=1
\]
and the corresponding semantic aggregation-validity condition is
\[
\begin{aligned}
\Framework.\mathsf{ValidAgg}
(\kappa_t^\phi,\psi,\mathsf{Rec}_t,\nu_t)=1
&\Longleftrightarrow \\
\Framework.\mathsf{ValidRec}
(\kappa_t^\phi,\mathsf{Rec}_t)=1 
&
{}\land x_t^{\mathsf{agg}}\in\mathcal{L}_{\psi}.
\end{aligned}
\]
Since the aggregation relation $\mathcal{R}_{\psi}$ enforces the prescribed participation, extraction, aggregation, weighting, and model-update computation,
\(
x_t^{\mathsf{agg}}\in\mathcal{L}_{\psi}
\)
implies
\[
\theta_{t+1}
=
\Framework.\mathsf{AggOut}_{\mathsf{Type},\psi}
(\theta_t,\mathsf{Rec}_t,\Params).
\]

\medskip
\noindent\textbf{Privacy game helper notation.}
The privacy game additionally uses two pieces of helper notation derived from the ordinary client construction. Let $s$ denote the private client-side input and local private state required by the selected FL type, client policy, and realisation. We write
\[
(x,w)
\gets
\Framework.\mathsf{BuildClient}
(\kappa_t^\phi,i,s)
\]
for execution of the client-side construction up to formation of the public statement $x$ and witness $w$. In particular, this construction performs the policy-prescribed local computation, forms the committed collection, generates the commitment and auxiliary objects, and constructs $(x,w)$ as in Algorithm~\ref{alg:client}, but stops before the relation check and NIZK proof generation. This notation is only shorthand for the existing client algorithm. For a client statement
\(
x_{i,t}
=
(i,t,\phi,\theta_t,C_{i,t},
 Z_{i,t}^{\mathsf{pub}},\Params),
\)
define
\[
\Framework.\mathsf{CommonCtx}(x_{i,t})
=
(i,t,\phi,\theta_t,
 Z_{i,t}^{\mathsf{pub}},\Params).
\]

Thus, $\Framework.\mathsf{CommonCtx}$ contains exactly the public fields that must remain identical in the two privacy-challenge worlds. The challenge-dependent commitment $C_{i,t}$ is excluded from this, although the commitment itself remains part of the public statement.

Thus, for a fixed round, the verification predicates describe what honest parties can efficiently check, while the corresponding validity predicates describe the semantic conditions whose violation defines success in the security games below. The helper notation above is used only to express these games compactly and does not add new protocol functionality.

\subsection{Update Admissibility}\label{app:game-adm}
We now formalise Property~\textbf{(P1)}. The experiment captures the event that a corrupted client produces a message that is accepted by the efficient client-verification procedure although the message is not semantically valid under the active client policy. Fix a protocol round $t$ arising from an honest framework execution under the framework setup, and let $\kappa_t^\phi$ denote its active public context as defined in Appendix~\ref{app:security-game-notions}.

\begin{figure}[!h]
\centering
\small
\GameBox{Game
$\mathsf{Game}^{\mathsf{adm}}_{\Framework,\mathcal{A}_1}
(1^\lambda)$}{%

$\triangleright$ \textbf{Setup}\\
Challenger $\mathcal{C}$ computes
\(
(\mathsf{pp}_{\Framework},
 \{\mathsf{st}^{\mathsf{setup}}_P\}_{P\in\mathcal P})
\gets
\Framework.\mathsf{Setup}(1^\lambda).
\)

$\mathcal{C}$ gives
$(\mathsf{pp}_{\Framework},\kappa_t^\phi)$
and the prescribed setup state of the corrupted client to $\mathcal{A}_1$.

$\triangleright$ \textbf{Adversary output}\\
The adversary, acting as corrupted client $i$, outputs
\(
\mu_{i,t}
\gets
\mathcal{A}_1
(\mathsf{pp}_{\Framework},\kappa_t^\phi).
\)

\textbf{Return} $1$ if and only if
\[
\Framework.\mathsf{VfyClient}
(\kappa_t^\phi,i,\mu_{i,t})=1
\,\land\,
\Framework.\mathsf{ValidClient}
(\kappa_t^\phi,i,\mu_{i,t})=0.
\]
}
\caption{Update-admissibility game.}
\label{fig:game-adm}

\end{figure}

\begin{definition}[Update admissibility]
The update-admissibility advantage of $\mathcal{A}_1$ is
\[
\operatorname{Adv}^{\mathsf{adm}}_{\Framework,\mathcal{A}_1}(\lambda)
=
\Pr[
\mathsf{Game}^{\mathsf{adm}}_{\Framework,\mathcal{A}_1}
(1^\lambda)=1
].
\]

The framework provides update admissibility if this advantage is negligible for every $\mathsf{PPT}$ adversary $\mathcal{A}_1$.
\end{definition}

\begin{proof}[Proof (Theorem~\ref{thm:update-adm})]
Suppose that $\mathcal{A}_1$ wins the game. Since
\[
\Framework.\mathsf{VfyClient}
(\kappa_t^\phi,i,\mu_{i,t})=1,
\]
the authenticated identity, round, context, and extraction checks succeed, and
\(
\Pi.\Verify_{\mathcal{R}_\phi}(\ppnizk,x_{i,t},\pi_{i,t})=1.
\)

Since
\(
\Framework.\mathsf{ValidClient}
(\kappa_t^\phi,i,\mu_{i,t})=0
\)
uses the same deterministic identity, context, and extraction conditions, it follows that
\(
x_{i,t}\notin\mathcal{L}_{\phi}.
\)

Hence
\[
x_{i,t}\notin\mathcal{L}_{\phi}
\quad\text{and}\quad
\Pi.\Verify_{\mathcal{R}_\phi}(\ppnizk,x_{i,t},\pi_{i,t})=1.
\]

An adversary $\mathcal{B}_{\mathsf{adm}}$ that outputs
\(
(x_{i,t},\pi_{i,t})
\)
therefore breaks soundness of $\Pi$ whenever $\mathcal{A}_1$ wins. This proves the stated bound.
\end{proof}


\subsection{Aggregation Correctness}\label{app:game-agg}
We next formalise Property~\textbf{(P2)}. The experiment captures the event that a corrupted aggregator produces an aggregation message that passes the framework's efficient verification procedure but is not semantically valid with respect to the active aggregation policy and the recorded admitted set.
Fix the active client context $\kappa_t^\phi$ and aggregation policy $\psi\in\Phi$ for the round.

\begin{figure}[!h]
\centering
\small
\GameBox{Game
$\mathsf{Game}^{\mathsf{agg}}_{\Framework,\mathcal{A}_2}
(1^\lambda)$}{%

$\triangleright$ \textbf{Setup}\\
Challenger $\mathcal{C}$ computes
\(
(\mathsf{pp}_{\Framework},
 \{\mathsf{st}^{\mathsf{setup}}_P\}_{P\in\mathcal P})
\gets
\Framework.\mathsf{Setup}(1^\lambda).
\)

$\mathcal{C}$ gives
\(
(\mathsf{pp}_{\Framework},\kappa_t^\phi,\psi)
\)
and the prescribed setup state of the corrupted aggregator to
$\mathcal{A}_2$.

$\triangleright$ \textbf{Adversary output}\\
The adversarial aggregator outputs
\[
(\mathsf{Rec}_t,\nu_t)
\gets
\mathcal{A}_2
(\mathsf{pp}_{\Framework},\kappa_t^\phi,\psi).
\]

\textbf{Return} $1$ if and only if

\[
\begin{aligned}
&\Framework.\mathsf{VfyAgg}
(\kappa_t^\phi,\psi,\mathsf{Rec}_t,\nu_t)=1
\\
&\quad {}\land
\Framework.\mathsf{ValidAgg}
(\kappa_t^\phi,\psi,\mathsf{Rec}_t,\nu_t)=0.
\end{aligned}
\]
\vspace{0.5em}
}
\caption{Aggregation-correctness game.}
\label{fig:game-agg}

\end{figure}

\begin{definition}[Aggregation correctness]
The aggregation-correctness advantage of $\mathcal{A}_2$ is
\[
\operatorname{Adv}^{\mathsf{agg}}_{\Framework,\mathcal{A}_2}(\lambda)
=
\Pr[
\mathsf{Game}^{\mathsf{agg}}_{\Framework,\mathcal{A}_2}
(1^\lambda)=1
].
\]

The framework provides aggregation correctness if this advantage is negligible for every $\mathsf{PPT}$ adversary $\mathcal{A}_2$.
\end{definition}

\begin{proof}[Proof (Theorem~\ref{thm:agg-correct})]
Suppose that $\mathcal{A}_2$ wins the aggregation-correctness game.

Since
\(
\Framework.\mathsf{VfyAgg}
(\kappa_t^\phi,\psi,\mathsf{Rec}_t,\nu_t)=1,
\)
all deterministic verification checks succeed. In particular, the active round, policies, current model, public parameters, and recorded list are consistent and the record contains distinct client identifiers. Every recorded message passes $\Framework.\mathsf{VfyClient}$ and the aggregation proof is accepted.

We consider two cases:

\noindent\emph{(i)} First suppose that some entry
\(
(i,\mu_{i,t})\in\mathsf{Rec}_t
\)
satisfies
\[
\Framework.\mathsf{ValidClient}
(\kappa_t^\phi,i,\mu_{i,t})=0.
\]
Since
\(
\Framework.\mathsf{VfyClient}
(\kappa_t^\phi,i,\mu_{i,t})=1,
\)
the deterministic identity, context, and extraction checks already succeed.
Therefore
\(
x_{i,t}\notin\mathcal{L}_{\phi}
\)
while
\(
\Pi.\Verify_{\mathcal{R}_\phi}(\ppnizk,x_{i,t},\pi_{i,t})=1.
\)
This yields a soundness violation of $\Pi$. A union bound over at most $q(\lambda)$ recorded client messages gives the first term.

\noindent\emph{(ii)} Secondly it now remains to consider the case
\[
\Framework.\mathsf{ValidRec}
(\kappa_t^\phi,\mathsf{Rec}_t)=1.
\]
Since $\mathcal{A}_2$ still wins,
\(
\Framework.\mathsf{ValidAgg}
(\kappa_t^\phi,\psi,\mathsf{Rec}_t,\nu_t)=0,
\)
and therefore
\(
x_t^{\mathsf{agg}}\notin\mathcal{L}_{\psi}.
\)
The relation $\mathcal{R}_{\psi}$ captures compliance with the prescribed aggregation policy. Hence, any aggregation statement that does not satisfy the requirements encoded by $\psi$ falls outside $\mathcal{L}_{\psi}$.

However, at the same time we have,
\(
\Pi.\Verify_{\mathcal{R}_\psi}
(\ppnizk,x_t^{\mathsf{agg}},\pi_t^{\mathsf{agg}})=1.
\)
Hence an adversary that outputs
\(
(x_t^{\mathsf{agg}},\pi_t^{\mathsf{agg}})
\)
breaks soundness of $\Pi$, giving the final term.

It is important to note that the guarantee is relative to the public record $\mathsf{Rec}_t$. We state here for completeness that showing that every message submitted to the aggregator was included in this record is outside this game and requires an auditable submission mechanism.
\end{proof}


\subsection{Policy-Relative Privacy}\label{app:game-priv}
Finally, we formalise Property~\textbf{(P3)}. The privacy experiment compares two valid private client inputs that induce the same permitted public context. The verifier receives the proof-carrying message produced from one of these inputs and attempts to distinguish which private input was used. We use the client-construction and common-context notation from Appendix~\ref{app:security-game-notions}.
\begin{figure}[!h]
\centering
\small
\GameBox{Game
$\mathsf{Game}^{\mathsf{priv}}_{\Framework,\mathcal{A}_2}
(1^\lambda,b)$}{%

$\triangleright$ \textbf{Setup}\\
Challenger $\mathcal{C}$ computes
\(
(\mathsf{pp}_{\Framework},
 \{\mathsf{st}^{\mathsf{setup}}_P\}_{P\in\mathcal P})
\gets
\Framework.\mathsf{Setup}(1^\lambda).
\)

$\mathcal{C}$ gives
\(
(\mathsf{pp}_{\Framework},\kappa_t^\phi)
\)
and the prescribed setup state of the corrupted participant to $\mathcal{A}_{2,1}$.

\medskip
$\triangleright$ \textbf{Challenge selection}\\
The first-stage adversary outputs
\(
(i,s_0,s_1,\mathsf{st})
\gets
\mathcal{A}_{2,1}
(\mathsf{pp}_{\Framework},\kappa_t^\phi).
\)

\medskip
$\triangleright$ \textbf{Challenge construction}\\
For each $j\in\{0,1\}$: \\
\qquad $\mathcal{C}$ computes
\(
(x_j,w_j)
\gets
\Framework.\mathsf{BuildClient}
(\kappa_t^\phi,i,s_j).
\)

\medskip
$\triangleright$ \textbf{Admissibility}\\
If
\[
\Framework.\mathsf{CommonCtx}(x_0)
\neq
\Framework.\mathsf{CommonCtx}(x_1)
\]
or
\[
(x_j,w_j)\notin\mathcal{R}_{\phi}
\quad
\text{for some }j\in\{0,1\},
\]
\textbf{Return} $0$.

\medskip
$\triangleright$ \textbf{Challenge}\\
$\mathcal{C}$ computes
\(
\pi_b
\gets
\Pi.\Prove_{\mathcal{R}_\phi}(\ppnizk,x_b,w_b)
\)
and gives
\(
(x_b,\pi_b)
\)
to $\mathcal{A}_{2,2}$.

\medskip
$\triangleright$ \textbf{Adversary output}\\
The second-stage adversary outputs
\[
b'
\gets
\mathcal{A}_{2,2}
(\mathsf{st},x_b,\pi_b).
\]

\textbf{Return} $b'$.
}
\caption{Policy-relative-privacy game.}
\label{fig:game-priv}

\end{figure}
\begin{definition}[Policy-relative privacy]
Let
\[
\mathcal{A}_2
=
(\mathcal{A}_{2,1},\mathcal{A}_{2,2}).
\]
The policy-relative-privacy advantage is
\[
\begin{aligned}
\operatorname{Adv}^{\mathsf{priv}}_{\Framework,\mathcal{A}_2}(\lambda)
=
\Bigl|
&
\Pr[
\mathsf{Game}^{\mathsf{priv}}_{\Framework,\mathcal{A}_2}
(1^\lambda,0)=1
]
\\
-
&
\Pr[
\mathsf{Game}^{\mathsf{priv}}_{\Framework,\mathcal{A}_2}
(1^\lambda,1)=1
]
\Bigr|.
\end{aligned}
\]

The framework provides policy-relative privacy if this advantage is negligible for every $\mathsf{PPT}$ verifier $\mathcal{A}_2$.
\end{definition}



\begin{proof}[Proof (Theorem~\ref{thm:policy-privacy})]
Let $(x_0, x_1$) be an admissible challenge pair such that
\[
x_0
=
(i,t,\phi,\theta_t,C_0,Z^{\mathsf{pub}},\Params),
\]
\[
x_1
=
(i,t,\phi,\theta_t,C_1,Z^{\mathsf{pub}},\Params).
\]
By definition we have,
\[
\Framework.\mathsf{CommonCtx}(x_0)
=
\Framework.\mathsf{CommonCtx}(x_1).
\]

Consider four hybrids. For each hybrid $H_i$, we write $\Pr[H_i=1]$ for the probability that the experiment outputs $1$.

\noindent\emph{$(H_0)$.}
The challenger uses $x_0$ and the real proof
\[
\pi_0
\gets
\Pi.\Prove_{\mathcal{R}_\phi}(\ppnizk,x_0,w_0).
\]

\noindent\emph{$(H_1)$.}
The statement remains $x_0$, but its proof is replaced by a simulated proof. By zero knowledge of $\Pi$,
\[
\left|
\Pr[H_0=1]-\Pr[H_1=1]
\right|
\leq
\operatorname{Adv}^{\mathsf{zk}}_{\Pi,\mathcal{B}_{\mathsf{zk},0}}
(\lambda).
\]

\noindent\emph{$(H_2)$.}
Replace the commitment $C_0$ by $C_1$ while keeping every field in $\Framework.\mathsf{CommonCtx}$ unchanged. The proof remains simulated. Since both challenge executions use the same policy $\phi$, their committed collections have the same policy-defined structure. By hiding of
$\mathsf{CS}$ we have,
\[
\left|
\Pr[H_1=1]-\Pr[H_2=1]
\right|
\leq
\operatorname{Adv}^{\mathsf{hid}}_{\mathsf{CS},\mathcal{B}_{\mathsf{hid}}}
(\lambda).
\]

\noindent\emph{$(H_3)$.}
Replace the simulated proof for $x_1$ by the real proof
\[
\pi_1
\gets
\Pi.\Prove_{\mathcal{R}_\phi}(\ppnizk,x_1,w_1).
\]
Again, by zero knowledge of $\Pi$,
\[
\left|
\Pr[H_2=1]-\Pr[H_3=1]
\right|
\leq
\operatorname{Adv}^{\mathsf{zk}}_{\Pi,\mathcal{B}_{\mathsf{zk},1}}
(\lambda).
\]

Hybrid $H_0$ is the privacy experiment with $b=0$, while hybrid $H_3$ is the privacy experiment with $b=1$. Therefore, by the triangle inequality,
\[
\begin{aligned}
\operatorname{Adv}^{\mathsf{priv}}_{\Framework,\mathcal{A}_2}(\lambda)
&=
\left|
\Pr[H_0=1]-\Pr[H_3=1]
\right|
\\
&\leq
\left|\Pr[H_0=1]-\Pr[H_1=1]\right| \\
&\quad+
\left|\Pr[H_1=1]-\Pr[H_2=1]\right|
\\
&\quad+
\left|\Pr[H_2=1]-\Pr[H_3=1]\right|.
\end{aligned}
\]
Applying the bounds established above for the three hybrid transitions
gives the claimed bound.

\end{proof}

\end{document}